\documentclass[a4paper,USenglish]{lipics-v2016-arxiv}

\usepackage{microtype}


\title{Optimal Data Reduction for Graph Coloring Using Low-Degree Polynomials\footnote{This work was supported by NWO Veni grant ``Frontiers in Parameterized Preprocessing'' and NWO Gravitation grant ``Networks''. An extended abstract appeared in the Proceedings of the 12th International Symposium on Parameterized and Exact Computation, IPEC 2017.}}


\author[1]{Bart M.\,P. Jansen}
\author[2]{Astrid Pieterse}
\affil[1]{Eindhoven University of Technology\\
  P.O. Box 513, Eindhoven, The Netherlands\\
  \texttt{b.m.p.jansen@tue.nl}}
\affil[2]{Eindhoven University of Technology\\
  P.O. Box 513, Eindhoven, The Netherlands\\
 \texttt{a.pieterse@tue.nl}}
\authorrunning{B.\,M.\,P. Jansen and A. Pieterse} 

\Copyright{Bart M.\,P. Jansen and Astrid Pieterse}

\keywords{graph coloring -- graph homomorphism -- kernelization}

\EventEditors{}
\EventNoEds{0}
\EventLongTitle{arXiv-version}
\EventShortTitle{arXiv-version}
\EventAcronym{arXiv-version}
\EventYear{2018}
\EventDate{}
\EventLocation{}
\EventLogo{}
\SeriesVolume{}
\ArticleNo{}


\newcounter{NameOfTheNewCounter}

\usepackage{xspace}
\theoremstyle{plain}

\newtheorem{myclaim}[theorem]{Claim}
\newtheorem{observation}[theorem]{Observation}
\newtheorem{rrule}[NameOfTheNewCounter]{Reduction rule}
\newcommand{\vectspan}{\ensuremath{\operatorname{span}_2}}
\newcommand{\somevector}{\operatorname{vect}}
\let\plainqed\qedsymbol
\newcommand{\claimqed}{$\lrcorner$}
\newenvironment{claimproof}{\begin{proof}\renewcommand{\qedsymbol}{\claimqed}}{\end{proof}\renewcommand{\qedsymbol}{\plainqed}}

\newcommand{\F}{F\xspace}

\newcommand{\pname}[1]{\textsc{#1}\xspace}
\newcommand{\notcontainment}{\ensuremath{\mathsf{NP \not\subseteq coNP/poly}}\xspace}
\newcommand{\NP}{\ensuremath{\mathsf{NP}}\xspace}

\newcommand{\eqvr}[0]{\ensuremath{\mathcal{R}}\xspace}
\newcommand{\containment}{\ensuremath{\mathsf{NP  \subseteq coNP/poly}}\xspace}
\newcommand{\q}{t'}
\newcommand{\N}{\ensuremath{\mathbb{N}}\xspace}
\newcommand{\Q}{\ensuremath{\mathcal{Q}}\xspace}
\newcommand{\OR}{\textsc{or}}
\newcommand{\Oh}{\mathcal{O}}
\newcommand{\FPT}{\ensuremath{\mathsf{FPT}}\xspace}
\newcommand{\true}{\emph{true}\xspace}

\newcommand{\ThreeColoring}{\pname{$3$-Coloring}\xspace}
\newcommand{\newgadget}{\ensuremath{\text{blocking-gadget}}\xspace}
\newcommand{\newgadgets}{\ensuremath{\text{blocking-gadgets}}\xspace}
\newcommand{\vect}[1]{\ensuremath{\mathbf{#1}}\xspace}

\newcommand{\qcoloring}{\textsc{$q$-Coloring}\xspace}
\newcommand{\todo}[1][]{%
  \ifx/#1/%
    \textcolor{red}{TODO!}%
  \else%
    \textcolor{red}{todo: #1}%
  \fi%
}


\newcommand{\defproblem}[3]{\par
 \vspace{3mm}
\noindent\fbox{
 \begin{minipage}{0.96\textwidth}
 \begin{tabular*}{\textwidth}{@{\extracolsep{\fill}}lr} #1 &  \vspace{1mm} \\ \end{tabular*}
{\textbf{Input:}} #2
  \vspace{1mm}\\%
 {\textbf{Question:}} #3
 \end{minipage}
 }
 \vspace{3mm}\par
}

\begin{document}

\maketitle

\begin{abstract}
The theory of kernelization can be used to rigorously analyze data reduction for graph coloring problems. Here, the aim is to reduce a \textsc{$q$-Coloring} input to an equivalent but smaller input whose size is provably bounded in terms of structural properties, such as the size of a minimum vertex cover. In this paper we settle two open problems about data reduction for \textsc{$q$-Coloring}.

First, we obtain a kernel of bitsize $\mathcal{O}(k^{q-1}\log{k})$ for \textsc{$q$-Coloring parameterized by Vertex Cover} for any $q\geq 3$. This size bound is optimal up to~$k^{o(1)}$ factors assuming~$\mathsf{NP \not\subseteq coNP/poly}$, and improves on the previous-best kernel of size~$\mathcal{O}(k^q)$. We generalize this result for deciding $q$-colorability of a graph $G$, to deciding the existence of a homomorphism from $G$ to an arbitrary fixed graph $H$. Furthermore, we can replace the parameter vertex cover by the less restrictive parameter twin-cover. We prove that \textsc{$H$-Coloring parameterized by Twin-Cover} has a kernel of size $\Oh(k^{\Delta(H)}\log k)$.

Our second result shows that \textsc{$3$-Coloring} does not admit non-trivial sparsification: assuming $\mathsf{NP \not\subseteq coNP/poly}$, the parameterization by the number of vertices~$n$ admits no (generalized) kernel of size~$\mathcal{O}(n^{2-\varepsilon})$ for any~$\varepsilon > 0$. Previously, such a lower bound was only known for coloring with~$q \geq 4$ colors.

\end{abstract}

\section{Introduction}
\label{intro}

The \pname{$q$-Coloring} problem asks whether the vertices of a graph can be properly colored using~$q$ colors. It is one of many colorability problems on graphs that have been widely studied.
Since these are often NP-hard, they are good candidates to study from a parameterized perspective~\cite{CyganFKLMPPS15,DowneyF13}. Here we use additional parameters, other than the size of the input, to describe the complexity of the problem. In this paper we study preprocessing algorithms (called kernelizations or kernels) that aim to reduce the size of an input graph in polynomial time, without changing its colorability status.

The natural choice for a parameter for \pname{$q$-Coloring} is the number of colors~$q$. However, since even \pname{$3$-Coloring} is NP-hard, this parameter does not give interesting results. Therefore the problem is studied using different parameters, that often try to capture the complexity of the input graph.
For example, Fiala et. al. \cite{FialaGK11Parameterized} compared the parameterized complexity of several coloring problems when parameterized by vertex cover, to the complexity when parameterized by treewidth. Jansen and Kratsch \cite{JansenK2013Data} studied  graph coloring when parameterized by a hierarchy of different parameters.

In this earlier work \cite{JansenK2013Data}, Jansen and Kratsch provided a kernel for \pname{$q$-Coloring parameterized by Vertex Cover}
with $\Oh(k^q)$ vertices that can be encoded in~$\Oh(k^q)$ bits. Furthermore they showed that for $q \geq 4$, a kernel of bitsize $\Oh(k^{q-1-\varepsilon})$ is unlikely to exist. Unfortunately, these bounds left a gap of a factor $k$ and it remained  unclear whether the upper or the lower bound had to be strengthened. As our first main result, we manage to close this gap by improving the kernel.

To obtain this improvement, we can use a recent result by the current authors  \cite{JansenP2016Optimal} about the kernelization of constraint satisfaction problems when parameterized by the number of variables. A non-trivial data reduction can be achieved when the constraints are given by equalities of low-degree polynomials on boolean variables. The size of the resulting instance then depends on the maximum degree of the given polynomials. Suppose now we are given a \pname{$3$-Coloring} instance $G$ with vertex cover $S$ and let $I = V(G)\setminus S$ be the corresponding independent set. One can think of each vertex $v \in I$ as a constraint of the form ``my neighbors use at most $2$ different colors'', such that a remaining color can be used to color $v$. We write these constraints as polynomial equalities  and apply our previous result to find out which ones are redundant. Since vertices of the independent set can be colored independently, a vertex that corresponds to a redundant constraint can be removed from $G$, without changing the $3$-colorability of $G$. We can apply this idea to obtain a kernel for \pname{$q$-Coloring parameterized by Vertex Cover}. The key technical step is to build a polynomial of degree~$q-1$ that captures the desired constraints.

In this paper, we further generalize the problem by studying the \pname{$H$-Coloring} problem. The problem asks for a given graph $G$ and fixed graph $H$, whether there exists a homomorphism $f \colon V(G) \rightarrow V(H)$ such that $\{u,v\} \in E(G) \Rightarrow \{f(u),f(v)\} \in E(H)$. Instead of using the size of a vertex cover as the parameter, we use a smaller parameter called twin-cover \cite{Ganian15ImprovingVC}.  We show  in Theorem \ref{thm:kernel} that \pname{$H$-Coloring} parameterized by the size of a twin-cover has a kernel with $\Oh(k^{\Delta(H)})$ vertices and bitsize $\Oh(k^{\Delta(H)}\log k)$.
Since \pname{$q$-Coloring} is equivalent to \pname{$K_q$-Coloring} where $K_q$ is the clique on $q$ vertices, this result immediately gives a kernel of bitsize $\Oh(k^{q-1} \log k)$ for \pname{$q$-Coloring} parameterized by vertex cover. This closes the gap with the lower bound for \pname{$q$-Coloring} up to $k^{o(1)}$ factors.

Often, when describing a kernel for a problem parameterized by a structural parameter like vertex cover, it is assumed that (an approximation of) the minimum vertex cover is given with the input \cite{Bougeret2017treedepthmodulator,Hols2017vckernel}. However, an interesting feature of our kernel for \pname{$H$-Coloring} is that it can be computed without knowing an (approximation of the) optimal twin-cover of the input graph. The fact that the graph has size-$k$ twin-cover is only used to analyze the size of the resulting kernel.

Our second main result concerns the parameterization by the number of vertices~$n$. The current authors showed in earlier work \cite{JansenP2016Sparsification} that for a number of graph problems it is impossible to give a kernel of size $\Oh(n^{2-\varepsilon})$, unless \containment. This implies that the number of edges cannot efficiently be reduced to a subquadratic amount without changing the answer, a task that is also known as sparsification. For example, \pname{$q$-Coloring} was shown to have no non-trivial sparsification for any $q \geq 4$, unless \containment. The case for $q=3$ remained open. One might think that \pname{$3$-Coloring} is so restrictive, that a $3$-colorable instance is likely to either be sparse, or have a very specific structure. Exploiting this structure could then allow for a non-trivial sparsification. In Theorem \ref{thm:threecoloring:lowerbound} we show that this is not the case: \pname{$3$-Coloring} allows no kernel of size $\Oh(n^{2-\varepsilon})$, unless \containment.

\subparagraph*{Related work.}

Hell and Ne\v{s}et\v{r}il showed that \pname{$H$-Coloring} is \NP-hard for any non-bipartite graph $H$ that has no self-loops \cite{HELL199092}. For a bipartite graph, the problem is equivalent to testing whether the input graph is bipartite, and thus polynomial-time solvable.  Chitnis et al. show that the problem of finding a smallest set $W \subseteq V(G)$ such that $G-W$ is $H$-list-colorable is \FPT when $H$ is $(C_6,P_6)$-free and bipartite, when parameterized by the size of $H$ together with the solution size \cite{Chitnis2017}. 

Ganian introduced \pname{Twin-Cover} as a new parameter \cite{Ganian15ImprovingVC} and gives relations to existing parameters. For example, a minimum twin-cover is not larger than a minimum vertex cover, but twin-cover is incomparable to treewidth. The paper also gives an FPT algorithm for \pname{Precoloring Extension} parameterized by the size of a twin-cover, and studies a number of other problems using this parameter.

Dell and Van Melkebeek showed that for any~$d \geq 3$, \pname{$d$-CNF-Satisfiability} with $n$ variables has no kernel of size $\Oh(n^{d-\varepsilon})$, unless \containment \cite{DellM14Satisfiability}.
Continuing this line of research, precise kernel lower bounds were shown for a variety of problems.  For example, it was shown that \pname{Vertex Cover} is unlikely to have a kernel of size  $\Oh(k^{2-\varepsilon})$ \cite{DellM14Satisfiability}, while a kernel with $\Oh(k^{2})$ edges and $\Oh(k)$ vertices is known.
Furthermore, the \pname{Point-Line cover} problem, which asks to cover a set of $n$ points in the plane with at most $k$ lines, was proven to have a tight kernel lower bound of size $\Oh(k^{2-\varepsilon})$ \cite{KratschPR16Point}, assuming \notcontainment .
Dell and Marx \cite{DellM12Kernelization} proved polynomial kernelization lower bounds for several packing problems. They showed how a table structure can help realize the reduction that is needed for such a lower bound. We will also use this table structure in our lower bound.

\section{Preliminaries}
%
To denote the set of numbers $1$ to $n$, we use the following notation: $[n] := \{i \in \mathbb{N} \mid 1 \leq i \leq n\}$.
For $x,y \in \mathbb{Z}$ we write~$x \equiv_2 y$ to denote that $x$ and $y$ are congruent modulo $2$. For a finite set $X$ and non-negative integer $k$, let $\binom{X}{k}$ be the collection of all subsets of $X$ of size exactly $k$ and let $\binom{X}{\leq k}$ be the collection of all subsets of $X$ of size at most $k$.

\subsection{Graphs}

All graphs considered in this paper are finite, simple, and undirected. In particular, this means that graphs do not have self-loops. A graph $G$ has vertex set $V(G)$ and edge set $E(G)$. For sets $X,Y \subseteq V(G)$, let $E_G(X,Y) := \{\{x,y\} \in E(G) \mid x \in X, y \in Y\}$ denote the edges with one endpoint in $X$ and one endpoint in $Y$.
Let $G[S]$ for $S\subseteq V(G)$ denote the subgraph of $G$ induced by $S$. For vertex set $X\subseteq V(G)$ we use $G-X: = G[V(G)\setminus X]$ to denote the result of removing the vertices in $S$ from $G$. For $F\subseteq E(G)$,  let $G \setminus F$ denote the result of removing all edges in $F$ from $G$.
Let $\Delta(G)$ denote the maximum degree of any vertex in $G$ and let $\omega(G)$ denote the size of a largest clique in $G$.

For a vertex $u \in V(G)$, let $N_G(u):= \{v \in V(G) \mid \{u,v\} \in E(G)\}$ denote its \emph{open neighborhood} and let $N_G[u] := N_G(u) \cup \{u\}$ denote its \emph{closed neighborhood}. For a vertex set $S \subseteq V(G)$, let $N_G(S):= \{v \in V(G)\setminus S \mid \{u,v\} \in E(G)\}$ denote its open neighborhood.

A \emph{vertex cover} of a graph $G$ is a set $S \subseteq V(G)$ such that each edge has at least one endpoint in $S$ (equivalently, $G-S$ is an \emph{independent set}).
We say vertices $u$ and $v \in V(G)$ are \emph{(true) twins} whenever $N_G[u] = N_G[v]$. Note that this relation is transitive. We say $X \subseteq V(G)$ is a \emph{twin-cover} \cite{Ganian15ImprovingVC} of $G$, if for every edge $\{u,v\} \in E(G)$, vertex $u \in X$, or $v \in X$, or $u$ and $v$ are twins.

A \emph{proper $q$-coloring} of $G$ is a function $f \colon V(G) \rightarrow [q]$ such that for all $\{u,v\} \in E(G) \colon f(u) \neq f(v)$.
Let $G$ and $H$ be graphs. We say that $G$ is \emph{$H$-colorable}  if there exists a function $f \colon V(G) \rightarrow V(H)$ such that for all $\{u,v\} \in E(G)$ it holds that $\{f(u),f(v)\} \in E(H)$. Such a function is also called a \emph{homomorphism} from $G$ to $H$. Note that $G$ has a homomorphism to $K_q$ (a clique on~$k$ vertices) if and only if $G$ is $q$-colorable.
In this paper, we will only consider \pname{$H$-Coloring} where $H$ has no self-loops and is not bipartite, as otherwise the problem is polynomial-time solvable. We will frequently use the following properties of $H$-colorings in the remainder of the paper.

\begin{observation} \label{obs:clique:hcoloring}
Let $S \subseteq V(G)$ such that $G[S]$ is a clique and let $f$ be a proper $H$-coloring of $G$. Define $X := \{f(v) \mid v \in S\}$. Then $H[X]$ is a clique in $H$ and  all vertices in $S$ receive a different color, so that $|S| = |X|$.
\end{observation}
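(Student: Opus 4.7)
The observation is essentially a direct consequence of the definitions of homomorphism and clique, combined with the standing assumption that $H$ has no self-loops. My plan is to argue injectivity of $f$ on $S$ first, and then deduce that the image $X$ induces a clique in $H$.

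For injectivity, I would pick two distinct vertices $u,v \in S$. Since $G[S]$ is a clique, $\{u,v\} \in E(G)$, so by the homomorphism property $\{f(u),f(v)\} \in E(H)$. Because $H$ has no self-loops, $E(H)$ contains no edge of the form $\{w,w\}$, hence $f(u) \neq f(v)$. Thus $f\restriction_S$ is injective, which immediately gives $|X| = |S|$.

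To see that $H[X]$ is a clique, I would take any two distinct $a,b \in X$ and pull them back through $f$: by construction there exist $u,v \in S$ with $f(u) = a$ and $f(v) = b$, and by injectivity $u \neq v$. Since $G[S]$ is a clique, $\{u,v\} \in E(G)$, so $\{a,b\} = \{f(u),f(v)\} \in E(H)$. As this holds for every pair of distinct vertices in $X$, the induced subgraph $H[X]$ is indeed a clique.

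There is no real obstacle here; the only subtlety worth flagging is the explicit reliance on the assumption (recalled just above the observation) that $H$ is loopless, which is exactly what prevents an edge of $G[S]$ from being collapsed to a single vertex of $H$. The proof should take only a few lines.
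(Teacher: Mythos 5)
Your proof is correct, and since the paper states this as an unproved \emph{observation}, your argument is exactly the short, direct one the authors intend: injectivity of $f$ on the clique $S$ follows from looplessness of $H$, and the image of a clique under a homomorphism is then a clique. The explicit flagging of the no-self-loops assumption is appropriate, as it is precisely what makes the injectivity step work.
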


\begin{observation} \label{obs:numcolors:delta}
Let $v \in V(G)$ and let $f$ be a proper $H$-coloring of $G$. Then the number of colors used to color $N_G(v)$ is bounded by $\Delta(H)$.
\end{observation}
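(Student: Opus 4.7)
The plan is a short direct argument using the defining property of a homomorphism. Fix $v \in V(G)$ and let $c := f(v) \in V(H)$ be its color under the proper $H$-coloring $f$. First I would observe that for every neighbor $u \in N_G(v)$, the edge $\{u,v\} \in E(G)$ forces $\{f(u), f(v)\} = \{f(u), c\} \in E(H)$, so $f(u) \in N_H(c)$. (Here one should note $f(u) \neq c$, since $H$ has no self-loops.)

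Consequently, the set of colors $\{f(u) \mid u \in N_G(v)\}$ used on the neighborhood of $v$ is contained in $N_H(c)$. The number of such colors is therefore bounded by $|N_H(c)|$, and since $|N_H(c)| \leq \Delta(H)$ by definition of maximum degree, the claim follows immediately.

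There is no real obstacle here: the statement is essentially immediate from unpacking the definitions of homomorphism and maximum degree. The only mild point to be careful about is to invoke the assumption (stated in the preliminaries) that $H$ has no self-loops when arguing that the colors used on $N_G(v)$ lie in the open neighborhood $N_H(f(v))$ rather than in the closed neighborhood.
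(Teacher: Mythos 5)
Your proof is correct and is exactly the short unpacking-of-definitions argument the paper has in mind; the paper states this as an unproved observation, and the route you take (colors on $N_G(v)$ lie in $N_H(f(v))$, which has size at most $\Delta(H)$) is the natural and intended one. Your remark about invoking the no-self-loops assumption is apt and shows you understand where the bound would otherwise weaken to $\Delta(H)+1$.
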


\subsection{Parameterized complexity}

A \emph{parameterized problem} \Q is a subset of $\Sigma^* \times \mathbb{N}$, where $\Sigma$ is a finite alphabet. Let $\Q,\Q' \subseteq \Sigma^*\times\mathbb{N}$ be parameterized problems and let $h\colon\mathbb{N}\rightarrow\mathbb{N}$  be a computable function. A \emph{generalized kernel for \Q into \Q' of size $h(k)$} is an algorithm that, on input $(x,k) \in \Sigma^*\times\mathbb{N}$, takes time polynomial in $|x|+k$ and outputs an instance $(x',k')$ such that:
\begin{enumerate}
\item $|x'|$ and $k'$ are bounded by $h(k)$, and
\item $(x',k')\in\Q'$ if and only if $(x,k) \in\Q$.
\end{enumerate}
The algorithm is a \emph{kernel} for $\Q$ if $\Q = \Q'$. It is a \emph{polynomial (generalized) kernel} if $h(k)$ is a polynomial.  Since a polynomial-time reduction to an equivalent sparse instance yields a generalized kernel, a lower bound for the size of a generalized kernel can be used to prove the non-existence of sparsification algorithms.

We use the framework of cross-composition~\cite{BodlaenderJK14} to establish kernelization lower bounds, requiring the definitions of polynomial equivalence relations  and \OR-cross-compositions. We repeat them here for completeness:

\begin{definition}[{Polynomial equivalence relation, \cite[Def. 3.1]{BodlaenderJK14}}] \label{definition:eqvr}
An equivalence relation~\eqvr on~$\Sigma^*$ is called a \emph{polynomial equivalence relation} if the following conditions hold.
\begin{itemize}
\item There is an algorithm that, given two strings~$x,y \in \Sigma^*$, decides whether~$x$ and~$y$ belong to the same equivalence class in time polynomial in~$|x| + |y|$.
\item For any finite set~$S \subseteq \Sigma^*$ the equivalence relation~$\eqvr$ partitions the elements of~$S$ into a number of classes that is polynomially bounded in the size of the largest element of~$S$.
\end{itemize}
\end{definition}

\begin{definition}[{Cross-composition, \cite[Def. 3.3]{BodlaenderJK14}}]\label{definition:crosscomposition}
Let~$L\subseteq\Sigma^*$ be a language, let~$\eqvr$ be a polynomial equivalence relation on~$\Sigma^*$, let~$\Q\subseteq\Sigma^*\times\N$ be a parameterized problem, and let~$f \colon \N \to \N$ be a function. An \emph{\OR-cross-com\-position of~$L$ into~$\Q$} (with respect to \eqvr) \emph{of cost~$f(t)$} is an algorithm that, given~$t$ instances~$x_1, x_2, \ldots, x_t \in \Sigma^*$ of~$L$ belonging to the same equivalence class of~$\eqvr$, takes time polynomial in~$\sum _{i=1}^t |x_i|$ and outputs an instance~$(y,k) \in \Sigma^* \times \mathbb{N}$ such that:
\begin{itemize}
\item The parameter~$k$ is bounded by $\Oh(f(t)\cdot(\max_i|x_i|)^c)$, where~$c$ is some constant independent of~$t$, and
\item instance $(y,k) \in \Q$ if and only if there is an~$i \in [t]$ such that~$x_i \in L$.\label{property:OR}
\end{itemize}
\end{definition}

\begin{theorem}[{\cite[Theorem 6]{BodlaenderJK14}}] \label{thm:cross_composition_LB}
Let~$L\subseteq\Sigma^*$ be a language, let~$\Q\subseteq\Sigma^*\times\N$ be a parameterized problem, and let~$d,\varepsilon$ be positive reals. If~$L$ is NP-hard under Karp reductions, has an \OR-cross-composition into~$\Q$ with cost~$f(t)=t^{1/d+o(1)}$, where~$t$ denotes the number of instances, and~$\Q$ has a polynomial (generalized) kernelization with size bound~$\Oh(k^{d-\varepsilon})$, then \containment.
\end{theorem}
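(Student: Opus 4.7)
The plan is to combine the hypothesised cross-composition with the hypothesised small kernel to construct a polynomial-time procedure that compresses the \OR of many NP-hard instances into a single instance of much smaller total bit-size, and then to invoke the distillation impossibility result (due to Fortnow--Santhanam, as sharpened by Dell--Van Melkebeek) to conclude \containment.

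First I would perform the compression. Take $t$ instances $x_1,\ldots,x_t$ of $L$, all in the same equivalence class of $\eqvr$ and of length at most $n := \max_i |x_i|$, and apply the cross-composition to obtain in polynomial time an instance $(y,k) \in \Sigma^*\times\N$ of $\Q$ with $k \leq c_0 \cdot t^{1/d + o(1)} \cdot n^{c}$ and $(y,k) \in \Q$ iff some $x_i \in L$. Then apply the assumed generalized kernelization of size $\Oh(k^{d-\varepsilon})$ to $(y,k)$; the resulting instance has bit-size at most $\Oh((t^{1/d+o(1)}\, n^{c})^{d-\varepsilon}) = \Oh(t^{\,1 - \varepsilon/d + o(1)}\cdot n^{\Oh(1)})$. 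The exponent of $t$ is strictly less than $1$, so the $t$-fold \OR has been squeezed into a length of the form $t^{1-\delta}\cdot n^{\Oh(1)}$ for some $\delta>0$. To drop the requirement that all $x_i$ lie in a single $\eqvr$-class I would bucket an arbitrary input collection by $\eqvr$-class, of which there are only $n^{\Oh(1)}$ many on length-$n$ inputs by Definition~\ref{definition:eqvr}, compress each bucket separately, and take the \OR; this is the standard bookkeeping needed to match the hypotheses of the distillation theorem.

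The hard part, which I would invoke as a black box rather than reprove, is the distillation lower bound itself: for any language NP-hard under Karp reductions, a polynomial-time weak OR-distillation of length $t^{1-\delta}\cdot n^{\Oh(1)}$ implies \containment. Its proof uses hashing together with nondeterminism to simulate \coNP queries with polynomial advice, and it is the real complexity-theoretic engine underlying all cross-composition lower bounds. Once this is taken for granted, the present theorem reduces to an accounting exercise: cost-$t^{1/d+o(1)}$ composition feeds a parameter of order $t^{1/d}$ into a kernel of size $\Oh(k^{d-\varepsilon})$, producing an output of length $t^{1-\varepsilon/d+o(1)}\cdot n^{\Oh(1)}$, exactly the subquadratic regime that distillation forbids.
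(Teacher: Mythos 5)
The paper cites this result from Bodlaender, Jansen, and Kratsch without reproving it, so there is no in-paper argument to compare against. Your sketch correctly reproduces the standard derivation from the cited reference: compose into $\Q$, apply the assumed kernel, observe the output has bit-size $t^{1-\varepsilon/d+o(1)}\,n^{\Oh(1)}$ (the exponent arithmetic $(1/d+o(1))(d-\varepsilon)=1-\varepsilon/d+o(1)$ is right), handle arbitrary input collections by bucketing on $\eqvr$-classes, pick $t$ polynomial in $n$ large enough that the compressed size is $o(t)$, and close with Dell--Van Melkebeek's complementary witness lemma to get $\containment$.
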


We will refer to an \OR-cross-composition of cost~$f(t) = \sqrt{t} \log (t)$ as a \emph{degree-$2$ cross-composition}. By Theorem~\ref{thm:cross_composition_LB}, a degree-$2$ cross-composition can be used to rule out generalized kernels of size~$\Oh(k^{2 - \varepsilon})$.

\section{Kernel for \texorpdfstring{$\mathbf{H}$}{H}-Coloring parameterized by Twin-Cover}
\label{section:kernel}

In this section, we give a kernel for \pname{$H$-Coloring} parameterized by the size of a twin-cover. We start by showing how to partition the graph into vertex sets that are twins in Section \ref{subsection:twin_decomposition}. We introduce some of the polynomial equalities that we use and their properties in Section \ref{subsection:polynomial_equalities}, and use them in Section \ref{subsection:list_of_equalities} to define the set of equalities that is constructed for a given input graph. In Section \ref{subsection:rrules} we define the three reduction rules our kernel will use and prove that they are safe. Finally, in Section \ref{subsec:kernel} we give the kernel.

\subsection{Twin Decomposition}\label{subsection:twin_decomposition}
Computing a minimum \pname{Twin-Cover} is NP-hard, since \pname{Vertex Cover} is NP-hard on graphs where no two vertices are twins.  We will therefore construct the kernel for \pname{$H$-Coloring} without knowing a twin-cover of the input graph. In order to do this, we decompose the graph into vertex sets consisting of twins. Recall that throughout the paper, twins are vertices with the same \emph{closed} neighborhood.

\begin{definition}[(Partial) twin decomposition]
A \emph{partial twin decomposition} of a graph $G$ is a partition~$\Pi = \{P_1, \ldots, P_m\}$ of $V(G)$, such that any two vertices in the same partite set are twins. Partition $\Pi$ is a \emph{twin decomposition} if furthermore any two vertices in different partite sets are not twins.
\end{definition}
To be able to use the twin decomposition for the kernelization procedure, we show how it can efficiently be computed. 

\begin{lemma}\label{lem:twin_decomposition:find}
A twin decomposition can be computed in~$\Oh(n+m)$ time.
\end{lemma}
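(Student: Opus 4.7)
The plan is to use \emph{partition refinement} with the closed neighborhoods of all vertices as pivot sets. We initialize the partition $\Pi_0 := \{V(G)\}$ as a single class. Then, iterating over the vertices $v_1, v_2, \ldots, v_n$ of $G$ in arbitrary order, we obtain $\Pi_i$ from $\Pi_{i-1}$ by replacing each class $C \in \Pi_{i-1}$ by the (at most) two classes $C \cap N_G[v_i]$ and $C \setminus N_G[v_i]$, discarding empty sets. The final partition $\Pi_n$ will be the required twin decomposition.

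For correctness, I would first observe that throughout the refinement, any two twins remain in the same class: if $u, w \in V(G)$ satisfy $N_G[u] = N_G[w]$, then for every vertex $v$ we have $u \in N_G[v] \Longleftrightarrow v \in N_G[u] = N_G[w] \Longleftrightarrow w \in N_G[v]$, so no pivot ever separates them. Conversely, suppose $u$ and $w$ are not twins, so that there exists a vertex $v$ with $v \in N_G[u] \triangle N_G[w]$. By symmetry of closed neighborhoods, exactly one of $u,w$ lies in $N_G[v]$, and hence when the pivot $N_G[v]$ is processed, $u$ and $w$ are placed into different classes and remain separated afterwards. Therefore the classes of $\Pi_n$ are exactly the equivalence classes of the twin relation.

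For the running time, I would invoke the standard partition refinement data structure (doubly linked lists representing each class, with back-pointers from vertex records to their list nodes, and a marker of the ``split part'' of each class being touched). Refining the entire partition by a pivot set $X$ takes $\Oh(|X|)$ time, because each element $x \in X$ is moved at most once and each class is only finalized after one traversal of its touched half. Summing over all pivots yields total running time
\[
\Oh\!\left(\sum_{v \in V(G)} |N_G[v]|\right) = \Oh(n + 2m) = \Oh(n+m),
\]
since each vertex appears in its own closed neighborhood once and in the closed neighborhood of each neighbor. Adding $\Oh(n+m)$ time for reading the graph and building the pivot sets from the adjacency lists gives the claimed bound.

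The main obstacle, and the only place where care is really needed, is the linear-time implementation of the refinement step: naively recomputing the intersection $C \cap N_G[v]$ for every class $C$ would cost $\Omega(n)$ per pivot. I would handle this by, for each pivot $N_G[v]$, iterating once over its members, and for each member $u$ either creating a new ``split-off'' class attached to $u$'s current class (on first touch) or appending $u$ to this split-off class; a second pass over the touched vertices commits the splits. This is a textbook partition refinement routine and guarantees the per-pivot cost of $\Oh(|N_G[v]|)$ stated above.
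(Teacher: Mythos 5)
Your proof is correct, but it takes a genuinely different route from the paper's. The paper sketches a sorting-based approach: augment each adjacency list with the vertex itself (so that lists represent closed neighborhoods), sort the vertices by their adjacency lists via radix-style bucketing, and then detect duplicates among consecutive list entries; true twins are exactly the vertices with identical augmented lists. You instead use partition refinement with the closed neighborhoods $N_G[v_1], \ldots, N_G[v_n]$ as pivot sets, arguing that twins are never separated (by symmetry of the closed-neighborhood relation) while any non-twin pair is separated by some pivot, and that the standard split-off data structure makes each refinement cost $\Oh(|N_G[v]|)$, summing to $\Oh(n+m)$. Both approaches are classical and correct; yours is arguably more self-contained and spells out the correctness invariant explicitly, while the paper's sort-and-compare method is conceptually simpler once one accepts that linear-time lexicographic sorting of the lists is available. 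One small presentational point: it is worth remarking (as you implicitly rely on) that each pivot touches at most $|N_G[v]|$ classes, so the per-pivot finalization pass is also bounded by $\Oh(|N_G[v]|)$; otherwise the claim of linear total time would need a sentence of justification.
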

\begin{proof}
This is for example stated in \cite[Exercise 2.17]{EfficientGraphRepresentations} for the case of finding false twins, which are vertices such that $N_G(u) = N_G(v)$. Finding (true) twins is similar. An example solution uses the adjacency-list representation, and adds each vertex to its own adjacency list. Then we efficiently sort the vertices based on their adjacency lists and use this to find duplicates.
\end{proof}
The next lemma shows how the twin decomposition and a minimal twin-cover may intersect.

\begin{lemma}\label{lem:twin_decomposition:twin_cover}
Let $G$ be a graph with twin decomposition $\Pi$ and a minimal twin-cover $S$. Then for any partite set $P \in \Pi$ it holds that either $P \subseteq S$ or $P \cap S = \emptyset$.
\end{lemma}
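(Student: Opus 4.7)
The plan is to argue by contradiction: suppose some partite set $P \in \Pi$ has both a vertex $u \in P \cap S$ and a vertex $v \in P \setminus S$, and derive a contradiction by exhibiting an edge that $S$ fails to cover in the twin-cover sense.

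First I would invoke the minimality of $S$: since $u \in S$ cannot be removed, there must exist an edge incident to $u$ whose coverage depends on $u$. Concretely, there is a vertex $w$ with $\{u,w\} \in E(G)$, $w \notin S$, and such that $u$ and $w$ are not twins, as otherwise $S \setminus \{u\}$ would still be a twin-cover.

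Next I would use the fact that $u$ and $v$ lie in the same partite set of $\Pi$, so $N_G[u] = N_G[v]$. Hence $w \in N_G[u] = N_G[v]$. Here I need the small observation that $w \neq v$: if $w = v$ then $u$ and $w$ would be twins (since $u$ and $v$ are), contradicting the choice of $w$. Therefore $w \in N_G(v)$, and so $\{v,w\}$ is an edge of $G$.

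Finally, I would apply the twin-cover property to $\{v,w\}$. Since $v \notin S$ and $w \notin S$, the definition of a twin-cover forces $v$ and $w$ to be twins, i.e.\ $N_G[v] = N_G[w]$. Combined with $N_G[u] = N_G[v]$ this yields $N_G[u] = N_G[w]$, which would make $u$ and $w$ twins, contradicting the defining property of $w$. The only potentially delicate step is the careful use of minimality in the first paragraph (and the side check $w \neq v$); everything else is a direct manipulation of the definitions.
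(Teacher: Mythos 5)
Your proposal is correct and takes essentially the same route as the paper: both proofs invoke minimality, use that $u$ and $v$ are twins to transfer an edge $\{u,w\}$ to $\{v,w\}$, and apply the twin-cover property with $v,w \notin S$ plus transitivity of the twin relation to force $u$ and $w$ to be twins. The only difference is presentational — the paper directly shows $S\setminus\{u\}$ is a twin-cover and contradicts minimality at the end, while you extract the offending vertex $w$ from minimality first and derive the contradiction from it.
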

\begin{proof}
Let $P \in \Pi$.
Suppose $P \cap S \neq \emptyset$ and $P\setminus S\neq \emptyset$.  Let $u \in S \cap P$ and $v \in P \setminus S$. We show that $S \setminus \{u\}$ is a twin-cover of $G$, which contradicts the assumption that $S$ is minimal.

Let $\{u,w\}$ for $w \neq v$ be any edge in $G$. Since $u$ and $v$ are twins, it follows that $\{v,w\} \in E(G)$. Thereby, either $w \in S$ and thus edge $\{u,w\}$ is covered by $w$, or $w$ and $v$ are twins. In this case, by transitivity of being twins $u$ and $w$ are also twins.
This proves that $S \setminus \{u\}$ is indeed a twin-cover of $G$, which is a contradiction.
\end{proof}

\subsection{Modeling constraints as polynomial equalities}\label{subsection:polynomial_equalities}
As explained in the introduction, the kernelization is based on a connection to constraint satisfaction problems. To find the kernel, we represent the constraints that a vertex set puts on the coloring of its neighborhood, as polynomial equalities. We then use this representation to find redundant  vertices and edges in the graph. To use this idea, we need some additional lemmas and definitions. Recall that a monomial of degree~$d$ is the product of~$d$ variables, with the unique monomial of degree zero being the constant~$1$. For example,~$x_1 \cdot x_3 \cdot x_3$ is a monomial of degree three. A monomial is \emph{multilinear} if each variable occurs at most once.

\begin{lemma}[{cf.~\cite[Claim 4]{JansenP2016Optimal}}] \label{lem:num_monomials}
There are at most $n^d+1$ multilinear monomials of degree at most~$d$ over a set of~$n$ variables.
\end{lemma}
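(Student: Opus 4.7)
The plan is to show the bound by a direct combinatorial injection. A multilinear monomial over variables $x_1, \ldots, x_n$ is uniquely determined by the subset $S \subseteq [n]$ of variables it contains (the empty subset giving the constant monomial $1$). So the number of multilinear monomials of degree at most $d$ equals $\sum_{k=0}^{d} \binom{n}{k}$, and it suffices to prove
\[
\sum_{k=0}^{d} \binom{n}{k} \;\leq\; n^{d} + 1.
\]

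First I would dispose of the trivial case $d = 0$, where both sides evaluate to constants satisfying $1 \leq 2$. For $d \geq 1$, I would separate out the constant monomial (contributing the $+1$ on the right) and show that the number of non-empty subsets of $[n]$ of size at most $d$ is at most $n^d$.

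The key step is to construct an explicit injection from $\{S \subseteq [n] : 1 \leq |S| \leq d\}$ into $[n]^d$. Given $S = \{i_1 < i_2 < \cdots < i_k\}$ with $1 \leq k \leq d$, map it to the $d$-tuple obtained by listing its elements in increasing order and padding to length $d$ by repeating the last element:
\[
\phi(S) \;=\; (i_1, i_2, \ldots, i_k, \underbrace{i_k, \ldots, i_k}_{d-k\text{ copies}}) \;\in\; [n]^{d}.
\]
This map is injective: from $\phi(S)$ one recovers $S$ as the set of distinct entries appearing in the tuple (equivalently, as the set of positions where the tuple reaches a new maximum). Since $|[n]^d| = n^d$, this bounds the number of non-empty multilinear monomials of degree at most $d$ by $n^d$, and adding $1$ for the constant monomial yields the claim.

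I do not anticipate any real obstacle here; the only subtle point is to make sure the padding scheme gives a well-defined injection, which is why I would explicitly describe how to recover $S$ from $\phi(S)$. The proof should be just a few lines.
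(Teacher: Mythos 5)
Your proposal is correct and uses essentially the same argument as the paper: both map each non-empty subset of $[n]$ of size at most $d$ injectively into $[n]^d$ by padding with a repeated element, and then add $1$ for the constant monomial. You simply spell out a concrete padding scheme and the recovery of $S$, which the paper leaves implicit.
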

\begin{proof}
The number of multilinear monomials over $n$ variables of degree at most $d$ is equal to $\sum_{i=0}^d \binom{n}{i}$. We will show that $\sum_{i=1}^d \binom{n}{i} \leq n^d$. The left side counts all non-empty subsets of $[n]$ of size at most $d$. Each of these can be mapped to a distinct $d$-tuple containing numbers from $[n]$, by repeating an arbitrary element. Since there are $n^d$ possible $d$-tuples, the claim follows.
\end{proof}

\begin{definition}[span]
Let $X$ be a set of vectors in~$\{0,1\}^d$ for some~$d \in \mathbb{N}$. Define $\vectspan(X)$ as the set of all vectors $\vect{y} \in \{0,1\}^d$ for which there exist $\vect{x_1},\ldots,\vect{x_\ell}\in X$ such that $\vect{y} \equiv_2 \sum_{i=1}^\ell \vect{x}_i$, i.e.,~$\vect{y}$ is a linear combination of $\vect{x_1},\ldots,\vect{x_\ell}$ when vectors are added component-wise, over the integers modulo $2$.

Let $p(x_1, \ldots, x_n)$ be a multivariate polynomial in (a subset of) the variables $x_1, \ldots, x_n$, evaluated over the integers modulo~$2$, of degree at most~$d$ for some fixed~$d$. Hence~$p$ is a weighted sum of monomials of degree at most~$d$ over~$x_1, \ldots, x_n$. For some fixed ordering of the monomials of degree~$d$ over~$x_1, \ldots, x_n$, let $\somevector(p)$ denote the vector containing the coefficients of the corresponding monomials in~$p$.

Let $P$ be a set of multivariate polynomials in (subsets of) the variables $x_1, \ldots, x_n$. We use~$\vectspan(P)$ to denote $\vectspan{(\{ \somevector(p) \mid p \in P\})}$, and we use $p \in \vectspan(P)$ to denote that $\vectspan(P)$ contains $\somevector(p)$.
\end{definition}
The following lemma follows from the definition above.
\begin{lemma}\label{lem:span}
Let $P$ be a set of polynomials of degree at most~$d$ over variable set $\vect{y}$, and let $q$ be a polynomial of degree at most~$d$ over $\vect{y}$. If $q \in \vectspan(P)$, then any assignment to $\vect{y}$ that satisfies $p(\vect{y}) \equiv_2 0$ for all $p \in P$, satisfies $q(\vect{y}) \equiv_2 0$.
\end{lemma}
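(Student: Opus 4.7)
The plan is to prove the lemma by unwinding the definition of $\vectspan$ and invoking the linearity of polynomial evaluation modulo~$2$. First I would unfold the hypothesis $q \in \vectspan(P)$: by definition this means $\somevector(q) \in \vectspan(\{\somevector(p) \mid p \in P\})$, so there exist (not necessarily distinct) polynomials $p_1, p_2, \ldots, p_\ell \in P$ for which $\somevector(q) \equiv_2 \sum_{i=1}^{\ell} \somevector(p_i)$, with vectors added coordinate-wise modulo~$2$.

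Next I would observe that $\somevector$ is a linear map from polynomials of degree at most~$d$ over $x_1, \ldots, x_n$ (with coefficients in~$\mathbb{Z}/2\mathbb{Z}$) to the vector space $\{0,1\}^N$ indexed by the fixed ordering of monomials of degree at most~$d$, since each coordinate simply reads off a single coefficient. Consequently, the coordinate-wise sum of coefficient vectors corresponds to the formal sum of polynomials, and the vector equality above translates into the polynomial identity $q \equiv_2 \sum_{i=1}^{\ell} p_i$. Thus for every assignment $\vect{y}$ to the variables, $q(\vect{y}) \equiv_2 \sum_{i=1}^{\ell} p_i(\vect{y})$.

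Finally, given any assignment $\vect{y}$ that satisfies $p(\vect{y}) \equiv_2 0$ for all $p \in P$, applying this in particular to each chosen $p_i$ yields $q(\vect{y}) \equiv_2 \sum_{i=1}^{\ell} 0 \equiv_2 0$, as required.

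There is no real obstacle in this argument: the lemma is essentially the statement that evaluation is a linear functional on $\mathbb{F}_2[x_1, \ldots, x_n]$, so a mod-$2$ linear combination of polynomials evaluates to the corresponding mod-$2$ linear combination of their evaluations. The only subtlety worth verifying explicitly is that $\somevector$ commutes with polynomial addition modulo~$2$, which is immediate from its definition as the vector of coefficients in a fixed monomial ordering.
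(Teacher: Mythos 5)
Your proposal is correct and follows essentially the same route as the paper: both proofs reduce the claim to the observation that, for a fixed boolean assignment, polynomial evaluation is a linear function of the coefficient vector modulo~$2$, so a mod-$2$ linear combination of the vectors in $\vectspan(P)$ translates into the corresponding linear combination of evaluations. The paper phrases this via the inner product $p(\vect{y}) = \somevector(p)\cdot\vect{y'}$ with the vector $\vect{y'}$ of monomial evaluations, while you phrase it as the identity $q \equiv_2 \sum_i p_i$ in $\mathbb{F}_2[x_1,\ldots,x_n]$; these are two wordings of the same linearity argument.
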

\begin{proof}
Choose $\alpha_p\in \{0,1\}$ for all $p \in P$  such that $\somevector(q) \equiv_2 \sum_{p \in P}\alpha_p \somevector(p)$. Consider an assignment to the variables~$\vect{y}$ with~$p(\vect{y}) \equiv_2 0$ for all~$p \in P$. Let $\vect{y'}$ be the vector containing the evaluation of the monomials of degree at most $d$ over~$\vect{y}$, for the values assigned to~$\vect{y}$. List them in the same order in which the coefficients for these monomials are listed in $\somevector(\cdot)$. Since a polynomial is a weighted sum of monomials, the value of a polynomial~$p$ of degree most~$d$ in~$\vect{y}$ for the assigned values, equals the inner product of~$\somevector(p)$ and~$\vect{y'}$. So:
\[
q(\vect{y}) = \somevector(q)\cdot \vect{y'} \equiv_2 \sum_{p \in P} \alpha_p\somevector(p)\cdot \vect{y'} \equiv_2 \sum_{p \in P} \alpha_p \cdot p(\vect{y}) \equiv_2 0.
\qedhere\]
\end{proof}

To utilize polynomials over boolean variables to represent solutions of graph $H$-coloring problems, we represent the color of a vertex $v$ in a graph $G$ by $|V(H)|$ boolean variables, indicating whether $v$ has the corresponding color. We now define a partial choice assignment, which reflect that any vertex receives at most one color.

\begin{definition}[Partial choice assignment]
Let $y_{i,k}\in\{0,1\}$ for $i \in [n]$, $k \in [q]$ be a set of boolean variables and let $\vect{y}$ be the vector containing all these variables. We say $\vect{y}$ is given a \emph{partial choice assignment} if for all $i \in [n]$:
\[\sum_{k = 1}^q y_{i,k} \leq 1.\]
\end{definition}

Note that a partial choice assignment sets at most $n$ variables to \true. By this definition, a partial choice assignment can be seen as a partial coloring in the following way: $y_{i,k} = 1$ means vertex $i$ has color $k$. Note that the coloring of some vertices may remain undefined.

The following lemma gives a polynomial that can be used to express the constraint that out of exactly~$q$ neighbors of a given vertex~$u$, there are at least two that have the same color. By combining multiple such constraints, we can ensure that at most $q-1$ different colors are used in the neighborhood of vertex $u$, leaving one color free for~$u$ itself in the $q$-coloring problem. When evaluating the polynomial for $\vect{y}$ that is given a partial choice assignment, the polynomial has the following two essential properties. (1) It equals $1$ modulo $2$ when the~$q$ vertices all receive a distinct color, and (2) it equals $0$ modulo $2$ whenever two vertices have the same color, or when two vertices have no color defined.

\begin{lemma}\label{lem:polynomial}
Let $q > 0$ be an integer and let $y_{i,k}$ for $i \in [q], k \in [q]$ be  boolean variables.  Then there exists a polynomial $p$ of degree $q-1$ over the integers modulo 2, such that whenever the variables in~$\vect{y}$ are given a partial choice assignment, it holds that $p(\vect{y}) \equiv_2 1$ if and only if
\begin{itemize}
\item there exist no $i,j,k \in [q]$ with $i \neq j$ such that $y_{i,k} = y_{j,k} = 1$, and
\item for all $k \in [q-1]$ there exists $i \in [q]$ such that $y_{i,k} = 1$.
\end{itemize}
\end{lemma}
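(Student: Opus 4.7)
The plan is to take
\[ p(\vect{y}) \;=\; \prod_{k=1}^{q-1} \sum_{i=1}^{q} y_{i,k}, \]
viewed as a polynomial over the integers modulo $2$. The $q-1$ factors involve pairwise disjoint variable sets (factor $k$ only uses the variables $y_{\cdot,k}$), so expanding the product yields a multilinear polynomial of degree exactly $q-1$, as required.

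Setting $S_k := \sum_{i=1}^{q} y_{i,k}$ evaluated over the integers, we have $p(\vect{y}) \equiv_2 \prod_{k=1}^{q-1} S_k$, and a product of integers is odd precisely when every factor is odd. Hence $p(\vect{y}) \equiv_2 1$ iff $S_k$ is odd for every $k \in [q-1]$. I will show that, under a partial choice assignment, this ``all odd'' condition is equivalent to the two bulleted conditions. For the easy direction assume (i) and (ii): the partial-choice hypothesis combined with (i) forces $S_k \leq 1$ for every $k$, while (ii) gives $S_k \geq 1$ for $k \in [q-1]$; hence $S_k = 1$ for those $k$ and $p(\vect{y}) \equiv_2 1$.

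For the converse, assume every $S_k$ for $k \in [q-1]$ is odd, so $S_k \geq 1$, directly giving (ii). To derive (i), the key step is to combine parity with the partial-choice bound
\[ \sum_{k=1}^{q} S_k \;=\; \sum_{i=1}^{q}\sum_{k=1}^{q} y_{i,k} \;\leq\; q. \]
If some $S_{k^\star} \geq 3$ while the other $S_k$ for $k \in [q-1]$ are odd and hence $\geq 1$, then $\sum_{k=1}^{q-1} S_k \geq 3 + (q-2) = q+1$, contradicting the displayed bound. Therefore $S_k = 1$ for every $k \in [q-1]$, so the colors $1, \ldots, q-1$ are each used by exactly one of the $q$ vertices. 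The one remaining vertex either has no color or is assigned color $q$, so no color appears twice, establishing (i).

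There is no real obstacle beyond guessing the polynomial; the only subtlety is exploiting the partial-choice inequality $\sum_k S_k \leq q$ to upgrade ``odd'' to ``equal to $1$'', which simultaneously enforces (ii) and rules out repeated colors (including color $q$, which is not mentioned in (ii) but is covered by (i)).
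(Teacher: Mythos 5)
Your proposal is correct, but it takes a genuinely different route from the paper. The paper constructs the polynomial
\[
p_{\text{paper}}(\vect{y}) \;=\; \sum_{\substack{i_1,\ldots,i_{q-1}\in[q]\\ \text{distinct}}} \prod_{k=1}^{q-1} y_{i_k,k},
\]
i.e.\ it sums only over injective choices of indices, and then argues via a case analysis on the values $x_i$ (which color, if any, vertex $i$ has) that in the ``good'' configurations exactly one summand survives while in every ``bad'' configuration the surviving summands come in pairs. You instead take the full product $\prod_{k=1}^{q-1}\sum_{i}y_{i,k}$, whose expansion is the same sum but without the distinctness restriction; under a partial choice assignment the extra monomials (those with a repeated row index $i$) vanish automatically, because a repeated index forces $y_{i,a}y_{i,b}$ with $a\neq b$ and the row constraint $\sum_k y_{i,k}\le 1$ kills that. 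The two polynomials therefore agree as functions on all partial choice assignments, and yours has the cleaner algebraic structure: the product is odd iff each column sum $S_k$ is odd, and you then use the global counting bound $\sum_k S_k\le q$ (which follows from the $q$ row constraints) to upgrade ``$S_k$ odd'' to ``$S_k=1$'' for $k\in[q-1]$ and simultaneously force $S_q\le 1$, which is exactly (i) and (ii). This replaces the paper's somewhat ad hoc case distinction with a short parity-plus-pigeonhole argument. Both polynomials have degree exactly $q-1$ and are multilinear, so either serves the downstream kernelization equally well.
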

\noindent Before proving Lemma \ref{lem:polynomial}, we give the polynomial $p$ corresponding to $q = 3$ as an example.
\begin{align*}
p(\vect{y}) :=&  \sum_{i_1\neq i_2 \in [3]} \prod_{k=1}^2   y_{i_k,k}\\
=&\ y_{1,1} \cdot y_{2,2} +y_{1,1} \cdot y_{3,2} +y_{2,1} \cdot y_{1,2}  +y_{2,1} \cdot y_{3,2}+y_{3,1} \cdot y_{1,2} +y_{3,1} \cdot y_{2,2}.
\end{align*}
One may verify for this example that letting $y_{1,1} = y_{2,2} = y_{3,3} = 1$ and all other variables zero, gives $p(\vect{y}) = 1 \equiv_2 1$, as desired. Setting $y_{1,1} = y_{2,2} = y_{3,2} = 1$ and all other variables to zero, gives $p(\vect{y}) = 2 \equiv_2 0$, which explains why the modulus is used. Furthermore, letting $y_{1,1} = y_{2,2} = 1$ and all other variables be zero, also results in $p(\vect{y}) \equiv_2 1$. 
We now proceed with the general construction.
\begin{proof}[Proof of Lemma \ref{lem:polynomial}.]
Define the multivariate polynomial $p$ as
\[ \label{eq:coloring_polynomial}
p(\vect{y}) := \sum_{\substack{i_1,\ldots,i_{q-1} \in [q]\\ \text{distinct}}}  \prod_{k=1}^{q-1} y_{i_k,k}. 
\]

We prove that~$p$ has the desired properties. It is easy to see the degree of $p$ is $q-1$. It remains to prove the claim on the values of~$p(\vect{y})$ for partial choice assignments. So let~$\vect{y}$ be given a partial choice assignment, and for each~$i \in [q]$ let $x_i := k$ exactly when $y_{i,k} = 1$. Let $x_i:=0$ when there is no such $y_{i,k}$.

We now show that $p(\vect{y}) \equiv_2 1$ if there exist no $i,j,k \in [q]$ with $i \neq j$ such that $y_{i,k} = y_{j,k} = 1$, and for all $k \in [q-1]$ there exists $i \in [q]$ such that $y_{i,k} = 1$. In terms of the values for $x_i$, this implies that they are all distinct, and that $[q-1] \subseteq \{x_1,\ldots,x_q\}$. Thereby,  we only have to consider the following two cases. Either $\{x_1,\ldots,x_q\} = [q]$, or $\{x_1,\ldots,x_q\} = [q-1]\cup \{0\}$.

Suppose that we are in one of the two situations above. For $k \in [q-1]$, let $j_k$ be the unique index such that $x_{j_k} = k$, implying that~$y_{j_k,k} = 1$. Note that this is well defined, since all values from $[q-1]$ are used exactly once. Then, $\prod_{k=1}^{q-1} y_{j_k,k} = 1$. For any other choice of distinct indices $i_1,\ldots, i_{q-1} \in [q]$, there exists $m \in [q-1]$ such that $i_m \neq j_m$. This implies that $y_{i_m,m} = 0$ and thereby $\prod_{k=1}^{q-1} y_{i_k,k} = 0$. Thus, $p(\vect{y}) = 1 \equiv_2 1$.

Now  suppose there exist $i,j\in[q]$, such that $x_i = x_j\neq 0$, or there exists $k \in [q-1]$ such that  $y_{i,k} = 0$ for all~$i \in [q]$. We show that $p(\vect{y}) \equiv_2 0$ by a case distinction.
\begin{itemize}
\item There exists $k \in [q-1]$ such that $\sum_{i=1}^q y_{i,k} = 0$, or equivalently there is no $i \in [q]$ such that $x_i = k$. Thereby, for any choice of $i_1,\ldots, i_{q-1} \in [q]$, we have that $ \prod_{\ell=1}^{q-1} y_{i_\ell,\ell} = 0$, since $ y_{i_k,k} = 0$. Thus, $p(\vect{y}) \equiv_2 0$.
\item There exists no $k \in [q-1]$ such that $\sum_{i=1}^q y_{i,k} = 0$. Thereby, for each $k \in [q-1]$ there exists $i \in [q]$ such that $x_i = k$. It follows from our earlier assumption that there must exist $i,j,k \in [q]$ with $i \neq j$ such that $x_i = x_j = k$, which implies~$k < q$.
    For all $c\in [q-1]$ with $c\neq k$, let $i_c$ be the unique index such that $x_{i_c} = c$ and thus $y_{{i_{c}},c} = 1$. Then
    \[y_{i,k} \cdot \prod_{\substack{c = 1\\c \neq k}}^{q-1} y_{{i_{c}},c} = y_{j,k} \cdot\prod_{\substack{c = 1\\c \neq k}}^{q-1} y_{{i_{c}},c} = 1.\]
     However, $\prod_{c = 1}^{q-1} y_{{i_{c}},c} =  0$ for any other choice of $i_1,\ldots,i_{q-1}$. Thereby, $p(\vect{y}) = 2 \equiv_2 0$.\qedhere
\end{itemize}
\end{proof}


\subsection{Construction of polynomial equalities}\label{subsection:list_of_equalities}
We continue to define the polynomial equalities that will be constructed for a subset $P$ of the vertices of $G$. These are necessary constraints on the coloring of $N_G(P)$, such that $P$ can be properly $H$-colored. In the construction, $P$ will be a partite set of the twin decomposition of $G$, and hence a clique.

Let $G$ be a graph with $P \subseteq V(G)$. We create variables $c_{v,i}$ for each $v \in V(G)$ and $i \in V(H)$, denoting whether $v$ has color $i$. Let $\vect{C}$ contain all constructed variables. Let $L(P,G)$  be the set of polynomial equalities produced by the following procedure, which results in two types of constraints. The first will ensure that the neighborhood of $P$ does not use too many colors, such that there are at least $|P|$ remaining colors to color (the clique) $P$. The second will ensure that the coloring of the neighborhood of $P$ can be extended to also color $P$.

For each set $S \subseteq N_G(P)$ with $|S| =  \Delta(H) + 1$ and each set $X \subseteq V(H)$ with $|X| = |S|$, use Lemma \ref{lem:polynomial} to find a polynomial $p_{P,S,X}$ such that $p_{P,S,X}(\vect{C}) \equiv_2 1$ if and only if the following two statements hold:

\begin{enumerate}
	\item there exist no $u\neq v \in S$, $k \in X$ such that $c_{u,k} = c_{v,k} = 1$, and
	\item let $X = \{x_1,\ldots,x_{|S|}\}$ then for all $k \in [|S|-1]$ there exists $u \in S$ such that $c_{u,x_k} = 1$.
\end{enumerate}
Add the  following constraint to $L(P,G)$:
\[p_{P,S,X}(\vect{C}) \equiv_2 0.\]

For the second type of constraints, consider each set $S \subseteq N_G(P)$ of $k \in [\Delta(H)]$ elements, and each sequence $x_1,\ldots,x_k \in V(H)$ (of not necessarily distinct elements). Let $S = \{s_1,\ldots,s_k\}$.
Let $Y:= \bigcap_{i=1}^{k} N_H(x_i)$ be the common neighborhood of $X$ in $H$.
If $H[Y]$ does not contain a clique of size at least $|P|$ (i.e. if $\omega(H[Y]) < |P|$), add the following polynomial equality to $L(P,G)$:
\[q_{P,S,X}(\vect{C}) := \prod_{i =1}^{k} c_{s_i,x_i}\equiv_2 0.\]
The computation of $\omega(H[Y])$ for some $Y\subseteq V(H)$ can be done in constant time, since $H$ is considered constant.
This concludes the construction of $L(P,G)$. Note that the constraints~$L(P,G)$ are defined solely in terms of the variables that describe the coloring of the \emph{open neighborhood} of~$P$.

Next, we define a complete list of equations for $G$.
\begin{definition}[{$\mathbf{L_\Pi(G)}$}]
Let $G$ be a graph and let $\Pi$ be a partition of its vertex set. Let $L_\Pi(G)$ be defined as follows.
\[L_{\Pi}(G) := \bigcup_{P \in \Pi} L(P,G).\]
\end{definition}
Since the polynomials for the first type of constraints have degree at most~$|S| - 1= \Delta(H)$ by Lemma~\ref{lem:polynomial}, while the polynomials for the second type of constraints are the product of~$k \leq \Delta(H)$ variables, we observe the following.

\begin{observation}\label{obs:deg_L}
Let $G$ be a graph with $\Pi$ a partition of its vertex set. The polynomials in $L_{\Pi}(G)$ have degree at most $\Delta(H)$.
\end{observation}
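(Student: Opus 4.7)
The plan is to simply verify the degree bound by inspecting the two types of polynomials that make up $L(P,G)$ for each $P \in \Pi$, since $L_\Pi(G)$ is defined as the union of these sets.

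First I would handle the polynomials $p_{P,S,X}$ of the first type. Here $|S| = \Delta(H)+1$ and $|X| = |S|$, and the polynomial is obtained via Lemma~\ref{lem:polynomial} applied with $q := |S| = \Delta(H)+1$. That lemma guarantees degree at most $q-1$, which equals $\Delta(H)$. So every polynomial of the first kind has degree at most $\Delta(H)$.

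Next I would handle the polynomials $q_{P,S,X}$ of the second type. By construction these are of the form $\prod_{i=1}^k c_{s_i,x_i}$, a product of $k$ variables, where $k \in [\Delta(H)]$. Hence their degree is exactly $k \leq \Delta(H)$.

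Since $L_\Pi(G) = \bigcup_{P \in \Pi} L(P,G)$ and every polynomial in $L(P,G)$ falls into one of the two cases above, every polynomial in $L_\Pi(G)$ has degree at most $\Delta(H)$. There is no real obstacle here: the statement is an immediate bookkeeping consequence of the construction in Section~\ref{subsection:list_of_equalities} together with the degree bound in Lemma~\ref{lem:polynomial}, and the proof amounts to little more than restating these two facts.
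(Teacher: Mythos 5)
Your argument matches the paper's own justification exactly: the first type of polynomial has degree at most $|S|-1 = \Delta(H)$ by Lemma~\ref{lem:polynomial}, and the second type is a product of $k \leq \Delta(H)$ variables, hence degree at most $\Delta(H)$. Nothing more is needed.
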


We now prove two useful properties for this choice of constraints.

\begin{lemma}\label{lem:constraints_satisfied_by_f}
Let $G$ be a graph with some partial twin decomposition $\Pi$. Let $f\colon V(G)\rightarrow V(H)$ be some coloring of the vertices of $G$. If  $f$ is a proper $H$-coloring of $G$, then the boolean assignment to $\vect{C}:=\{c_{v,i } \mid v \in V(G), i \in V(H)\}$ given by $c_{v,i} = 1 \Leftrightarrow f(v) = i$ satisfies all constraints in  $L_{\Pi}(G)$.
\end{lemma}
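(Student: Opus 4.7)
The plan is to verify each of the two families of constraints in $L(P,G)$ separately for every $P \in \Pi$. The crucial preparatory observation is that, because $\Pi$ is a partial twin decomposition, all vertices of $P$ share the same closed neighborhood; hence $P$ induces a clique in $G$, and every vertex of $N_G(P)$ is adjacent to \emph{all} vertices of $P$. A second preparatory remark is that the assignment $c_{v,i} = 1 \Leftrightarrow f(v)=i$ sets exactly one variable to $1$ per vertex, so it is a partial choice assignment and Lemma \ref{lem:polynomial} is applicable when evaluating the polynomials $p_{P,S,X}$.

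For the first family of constraints $p_{P,S,X}$, where $|S| = \Delta(H)+1$ and $|X|=|S|$, I would pick an arbitrary $p \in P$ and use $S \subseteq N_G(p)$ together with Observation \ref{obs:numcolors:delta} to conclude $|f(S)| \leq \Delta(H) = |S|-1$. By Lemma \ref{lem:polynomial}, it then suffices to rule out that the following two conditions hold simultaneously: (i) no two vertices in $S$ share a color from $X$, and (ii) each of $x_1,\ldots,x_{|S|-1}$ is used as a color on $S$. Assuming both, (ii) together with $|f(S)| \leq |S|-1$ forces $f(S) = \{x_1,\ldots,x_{|S|-1}\}$, so the $|S|$ vertices of $S$ receive colors from this $(|S|-1)$-element subset of $X$; by pigeonhole, two of them share a color in $X$, contradicting (i). Thus $p_{P,S,X}(\vect{C}) \equiv_2 0$.

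For the second family $q_{P,S,X}(\vect{C}) = \prod_{i=1}^{k} c_{s_i,x_i}$, which is added precisely when $\omega(H[Y]) < |P|$ for $Y := \bigcap_i N_H(x_i)$, I would argue by contradiction. Suppose the product evaluated to $1$; then $f(s_i)=x_i$ for every $i$. For any $p \in P$, the twin property makes $p$ adjacent to every $s_i$, so properness of the $H$-coloring gives $f(p) \in N_H(x_i)$ for each $i$, whence $f(p) \in Y$. Hence $f(P) \subseteq Y$. Since $P$ is a clique in $G$, Observation \ref{obs:clique:hcoloring} implies that $H[f(P)]$ is a clique in $H$ of size $|P|$, and this clique lies inside $H[Y]$, contradicting $\omega(H[Y]) < |P|$. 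Therefore $q_{P,S,X}(\vect{C}) \equiv_2 0$.

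The only genuinely delicate step is the case analysis for the first family, where one must combine Observation \ref{obs:numcolors:delta} with both conditions of Lemma \ref{lem:polynomial} and pigeonhole inside $X$. The remainder is mechanical bookkeeping, relying only on the two structural facts that $P$ is a clique and that every member of $P$ is adjacent to all of $N_G(P)$, both of which follow immediately from the definition of (partial) twin decomposition.
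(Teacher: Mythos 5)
Your proof is correct and takes essentially the same approach as the paper's: both arguments rest on the observation that a twin class $P$ is a clique fully adjacent to $N_G(P)$, then use Observation \ref{obs:numcolors:delta} with Lemma \ref{lem:polynomial} to handle the $p_{P,S,X}$ constraints and Observation \ref{obs:clique:hcoloring} together with the common-neighborhood set $Y$ to handle the $q_{P,S,X}$ constraints. Your pigeonhole argument for the first family merely spells out in more detail what the paper states briefly as ``some color in $X$ is used twice, or at least two colors in $X$ are unused''; the substance is identical.
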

\begin{proof}
Let $f$ be given and the value of any $c_{v,i}\in \vect{C}$ be defined by $c_{v,i} = 1 \Leftrightarrow f(v) = i$.
We show that this assignment satisfies all constraints in $L_{\Pi}(G)$, by showing that it satisfies both types of constraints in $L(P,G)$ for all $P \in \Pi$. Consider some~$P \in \Pi$. Since it consists of twins, it is a clique in~$G$. As~$H$ has no self-loops, the vertices in $P$ all receive distinct colors by Observation~\ref{obs:clique:hcoloring}, and the colors used on $P$ form a clique in $H$. The fact that $P$ consists of twins also implies that $\{u,v\} \in E(G)$ for all $u \in P, v \in N_G(P)$. Thereby, any color used in $P$ is not used in the coloring of $N_G(P)$.

Consider a constraint $p_{P,S,X}(\vect{C}) \equiv_2 0 \in L(P,G)$ for $S \subseteq N_G(P)$ of size~$|\Delta(H)|+1$ and $X \subseteq V(H)$ of the same size. By Observation~\ref{obs:numcolors:delta}, the vertices in $S$ use at most $\Delta(H) = |S|-1$ colors. Thereby, some color in $X$ is used twice, or at least two colors in $X$ are unused. It follows from Lemma \ref{lem:polynomial} that $p_{P,S,X}(\vect{C}) \equiv_2 0$ as required.

Consider a constraint $q_{P,S,X}(\vect{C}) \equiv_2 0 \in L(P,G)$ for $S \subseteq N(P)$ and $X = x_1,\ldots,x_{|S|} \in V(H)$. Suppose this constraint is not satisfied. Then the coloring of $S$ is  given by $X$ and furthermore, $H[Y]$ where $Y := \bigcap_{i=1}^{|S|} N_H(x_i)$ does not contain a clique on $|P|$ vertices. But a proper $H$-coloring (for~$H$ without self-loops) colors any clique in~$G$ with an equally-sized clique in~$H$, and the colors used on the clique~$P$ must be adjacent in~$H$ to all the colors used on the neighbors~$S$ of~$P$ in~$G$. Since~$H[Y]$ contains no clique on~$|P|$ vertices, $f$ cannot be a proper $H$-coloring of $G$. It follows that for proper $H$-colorings, all constraints are satisfied.
\end{proof}

Let $S\subseteq V(G)$ and let $f \colon S \rightarrow V(H)$ be  a proper $H$-coloring of $G[S]$. We say that $f$ can be \emph{extended} to properly color $G$, if there exists $f' \colon V(G) \rightarrow V(H)$ such that $f'$ is a proper $H$-coloring of $G$ and furthermore $f'(v) = f(v)$ for all $v \in S$. The next lemma shows that an $H$-coloring of a part of the graph can be extended to color the entire graph, if it satisfies certain constraints.

\begin{lemma}\label{lem:can_extend_to_v}
Let $G$ be a graph with $P' \subseteq V(G)$. 
Let $f\colon V(G)\setminus P' \rightarrow V(H)$ be a proper $H$-coloring of $G-P'$, such that  the boolean assignment to $\vect{C}:=\{c_{v,i } \mid v \in V(G), i \in V(H)\}$ given by $c_{v,i} = 1 \Leftrightarrow v \notin P' \wedge f(v) = i$ satisfies all constraints in $L(P',G)$. Then $f$ can be extended to properly color $G$.
\end{lemma}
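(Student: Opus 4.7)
The plan is to extend $f$ by assigning colors from a suitable clique in $H$ to the vertices of $P'$. Since this lemma is applied to partite sets of the twin decomposition, I assume $P'$ is a clique of mutual twins in $G$, so $N_G(v) \setminus P' = N_G(P')$ for every $v \in P'$. The argument proceeds in three stages: bound the palette $T := f(N_G(P'))$ used on the open neighborhood of $P'$, locate a clique of size $|P'|$ in the common $H$-neighborhood $Y := \bigcap_{t \in T} N_H(t)$, and stitch together the extension.

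For the first stage, I would show $|T| \leq \Delta(H)$ using the Type~1 constraints. Suppose for contradiction that $|T| \geq \Delta(H)+1$. Pick distinct $x_1, \dots, x_{\Delta(H)+1} \in T$ together with witnesses $s_1, \dots, s_{\Delta(H)+1} \in N_G(P')$ such that $f(s_i) = x_i$, and set $S := \{s_1, \dots, s_{\Delta(H)+1}\}$ and $X := \{x_1, \dots, x_{\Delta(H)+1}\}$. The assignment defined by $c_{v,j} = 1 \Leftrightarrow v \notin P' \wedge f(v) = j$ is a partial choice assignment in which every vertex of $S$ receives a distinct color and every color of $X$ appears on some vertex of $S$, so Lemma~\ref{lem:polynomial} yields $p_{P',S,X}(\vect{C}) \equiv_2 1$. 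This contradicts the hypothesis that all constraints in $L(P',G)$ are satisfied.

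For the second stage, let $k := |T| \leq \Delta(H)$, pick one representative $s_i \in N_G(P')$ per color in $T$, and set $x_i := f(s_i)$, so that $\{x_1, \dots, x_k\} = T$. The product $q_{P',S,X}(\vect{C}) = \prod_{i=1}^{k} c_{s_i, x_i}$ evaluates to $1$ under the given assignment. Were $\omega(H[Y]) < |P'|$, then by construction the Type~2 equality $q_{P',S,X} \equiv_2 0$ would belong to $L(P',G)$, contradicting our hypothesis. Hence $\omega(H[Y]) \geq |P'|$, and I can fix a clique $C \subseteq Y$ of size $|P'|$ in $H$.

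Finally, I would choose any bijection $\pi \colon P' \to C$ and define $f'(v) := \pi(v)$ for $v \in P'$ and $f'(v) := f(v)$ otherwise, then verify the three edge types. Edges inside $P'$ map to edges inside $C$, which is a clique in $H$. Edges between $v \in P'$ and $u \in N_G(P')$ map to $\{\pi(v), f(u)\} \in E(H)$, because $\pi(v) \in C \subseteq Y \subseteq N_H(f(u))$ by the definition of $Y$ and the fact that $f(u) \in T$. Edges entirely outside $P'$ are properly coloured because $f$ is a proper $H$-coloring of $G-P'$. The main subtlety is stage two: the Type~2 constraints are engineered precisely so that their satisfaction by $f$ is equivalent to the combinatorial extendability condition $\omega(H[Y]) \geq |P'|$, and the task is to spot that the correct constraint is present by construction.
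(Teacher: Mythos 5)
Your proof is correct and follows essentially the same three-step route as the paper's: use a Type~1 constraint to bound the palette on $N_G(P')$ by $\Delta(H)$, use the absence of the corresponding Type~2 constraint to extract a clique of size $|P'|$ in the common $H$-neighborhood $Y$, and color $P'$ injectively from that clique. One small remark: the opening assumption that $P'$ is a clique of mutual twins is never actually used — your verification of the three edge types goes through for arbitrary $P' \subseteq V(G)$, exactly as the lemma is stated — so the proof is more general than you claimed.
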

\begin{proof}
Let $f$ be given and $\vect{C}$ be defined by $c_{v,i} = 1 \Leftrightarrow v \notin P' \wedge f(v) = i$.
We start by showing that $N_G(P')$ uses at most $\Delta(H)$ different colors. Suppose not, then there is a set $S \subseteq N_G(P')$ of size $\Delta(H)+1$ using $\Delta(H)+1$ distinct colors. Let $X$ be the set of colors used in $S$.
It follows from Lemma \ref{lem:polynomial}, that $p_{P',S,X}(\vect{C})\equiv_2 1$. 
By definition, $L(P',G)$ contains the equation $p_{P',S,X}(\vect{C})\equiv_2 0$. This contradicts the assumption that all constraints in $L(P',G)$ are satisfied.

Let $X = \{x_1,\ldots,x_\ell\}$ be the set of colors used by $N_G(P')$. We know that $|X| \leq \Delta(H)$. Let $S = \{s_1.\ldots, s_{\ell}\}$ be a subset of $N_G(P')$ such that $f(s_i) = x_i$ for all $i \in [\ell]$. Thereby, $c_{s_i,x_i} = 1$ for all $i \in [\ell]$. By this definition, $q_{P',S,X}(\vect{C}) \equiv_2 1$ and thus $q_{P',S,X}(\vect{C}) \equiv_2 0 \notin L(P',G)$.
 Thereby, $G[Y]$ contains a clique $K$ of size $|P'|$, where $Y := \bigcap_{i=1}^{|X|} N_H(x_i)$. To extend $f$ to color $P'$, assign each vertex in $P'$ a distinct color from $K$.

 It remains to verify that we have given a proper $H$-coloring. Any edge between two vertices in $V(G)\setminus P'$ remains properly colored. Any edge in $P'$ is properly colored, because its endpoints have a different color and $K$ is a clique in $H$. Any edge between $P'$ and $V(G)\setminus P'$ is properly colored, because all vertices in $K$ are a common neighbor of the vertices in $X$, and $K \cap X = \emptyset$.
\end{proof}

\subsection{Reduction rules}\label{subsection:rrules}
We now present the three reduction rules that will be used to obtain the kernel, and prove that they are safe. The first checks whether the graph is trivially not $H$-colorable, the second removes sets of edges from the graph, and the third removes sets of vertices from the graph.

\begin{rrule}\label{rrule:trivial}
Let $G$ be a graph with twin decomposition $\Pi$. If there exists $P \in \Pi$ with $|P| > \omega(H)$,
return a trivial no-instance.
\end{rrule}

It is easy to see that Reduction rule \ref{rrule:trivial} preserves the answer to the problem, since $G$ cannot have a proper $H$-coloring by Observation~\ref{obs:clique:hcoloring}.

\begin{rrule}\label{rrule:remove-edge}
Let $G$ be a graph with twin decomposition $\Pi$. Let $P' \neq P'' \in \Pi$ such that $E_G(P',P'')\neq \emptyset$. If  $L(P',G) \subseteq \vectspan(L_{\Pi}(G \setminus E_G(P',P'')))$,  remove all edges in $E_G(P',P'')$ from graph $G$.
\end{rrule}

Reduction rule~\ref{rrule:remove-edge} is the key rule for our kernelization. It simplifies the graph by removing all edges between two distinct sets of twins~$P'$ and~$P''$, if the constraints~$L(P',G)$ are linear combinations of the constraints generated by the remaining graph~$G \setminus E_G(P',P'')$. The following lemma proves that the reduction rule is safe.

\begin{lemma}
If $G'$ is obtained from $G$ by applying Reduction rule \ref{rrule:remove-edge}, then $G$ is $H$-colorable if and only if $G'$ is $H$-colorable.
\end{lemma}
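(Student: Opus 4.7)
The forward direction is immediate: since $G' = G \setminus E_G(P',P'')$ is a subgraph of $G$, any proper $H$-coloring of $G$ is a proper $H$-coloring of $G'$. The substance lies in the reverse direction, where the plan is to take an $H$-coloring~$f$ of $G'$, strip off the colors on $P'$, and reconstruct a valid coloring on $P'$ by invoking Lemma~\ref{lem:can_extend_to_v}. The hypothesis $L(P',G) \subseteq \vectspan(L_{\Pi}(G'))$ is exactly what will let us verify the precondition of that lemma.

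First I would observe that $\Pi$ remains a partial twin decomposition of $G'$. Since $P'$ and $P''$ are blocks of twins in $G$, the set $E_G(P',P'')$ is a complete bipartite connection between them, so for any two vertices in a common block of $\Pi$ precisely the same edges are deleted; their closed neighborhoods therefore remain equal in $G'$. Next, since $f$ is a proper $H$-coloring of $G'$, Lemma~\ref{lem:constraints_satisfied_by_f} (applied to $G'$ with partial twin decomposition $\Pi$) gives that the boolean assignment defined by $c_{v,i} = 1 \Leftrightarrow f(v) = i$ satisfies every equality in $L_{\Pi}(G')$. Combined with the reduction-rule hypothesis and Lemma~\ref{lem:span}, this assignment also satisfies every equality in $L(P',G)$.

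Now I would modify the assignment to the one required by Lemma~\ref{lem:can_extend_to_v}: set $c_{v,i}^\star = 1$ iff $v \notin P'$ and $f(v) = i$, leaving the variables for vertices in $P'$ all zero. Because $H$ has no self-loops we have $P' \cap N_G(P') = \emptyset$, and as remarked after the construction of $L(P,G)$, every polynomial in $L(P',G)$ involves only variables indexed by vertices of $N_G(P')$. Hence $\vect{C}^\star$ agrees with the previous assignment on every variable appearing in $L(P',G)$, and therefore still satisfies all equalities in $L(P',G)$; it is also a partial choice assignment, since the original assignment was. Finally, $G - P' = G' - P'$, so $f$ restricted to $V(G)\setminus P'$ is a proper $H$-coloring of $G - P'$ and $\vect{C}^\star$ is precisely its associated assignment. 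Lemma~\ref{lem:can_extend_to_v} then extends this restriction to a proper $H$-coloring of $G$.

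The only nontrivial point is the bookkeeping that connects the three previous lemmas: one must check that $\Pi$ survives as a (partial) twin decomposition of $G'$ so that Lemma~\ref{lem:constraints_satisfied_by_f} can be applied to $G'$, and one must notice that zeroing the $P'$-variables changes nothing about the evaluation of polynomials in $L(P',G)$ because those polynomials mention only variables of $N_G(P')$. Once these two observations are in place, the chain Lemma~\ref{lem:constraints_satisfied_by_f} $\to$ Lemma~\ref{lem:span} $\to$ Lemma~\ref{lem:can_extend_to_v} yields the desired extension without further work.
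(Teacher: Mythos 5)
Your proof is correct and takes the same route as the paper's: apply Lemma~\ref{lem:constraints_satisfied_by_f} to $G'$ with the (still valid) partial twin decomposition $\Pi$, use Lemma~\ref{lem:span} together with the rule's hypothesis to conclude the derived assignment satisfies $L(P',G)$, and invoke Lemma~\ref{lem:can_extend_to_v} on the restriction of the coloring to $G-P'$. One small slip worth flagging: $P' \cap N_G(P') = \emptyset$ is simply the definition of the open neighborhood and has nothing to do with $H$ lacking self-loops, though this does not affect the validity of your argument.
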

\begin{proof}
Let $G'$ be $G \setminus  E_G(P',P'')$.
Clearly, if $G$ is $H$-colorable, then $G'$ is also $H$-colorable, since $G'$ is a subgraph of $G$.

In the other direction, let $f'$ be a proper $H$-coloring of $G'$. It follows from Lemma \ref{lem:constraints_satisfied_by_f} and the fact that $\Pi$ is a partial twin decomposition of $G\setminus E_G(P',P'')$ that the derived setting of the boolean variables $\vect{C}$ satisfies the constraints in $L_{\Pi}(G \setminus E_G(P',P''))$. Since $L(P',G)\subseteq \vectspan(L_{\Pi}(G \setminus E_G(P',P'')))$ it follows from Lemma \ref{lem:span} that this setting of $\vect{C}$ also satisfies all constraints in $L(P',G)$. Let $f$ be defined as $f'$ restricted to the vertices in $G'-P'$. Note that $G'-P'$ equals $G-P'$ by definition. It is easy to see that $f$ is a proper $H$-coloring of $G-P'$ since $G-P'$ is a subgraph of $G'$ and $f'$ is a proper $H$-coloring of $G'$. Furthermore, $f$ satisfies the constraints in $L(P',G)$ since it colors the relevant vertices the same as $f'$. It now follows from Lemma \ref{lem:can_extend_to_v} that we can extend $f$ to color all vertices in $G$. Thereby, $G$ has a proper $H$-coloring.
\end{proof}

The final rule effectively removes isolated cliques from the graph, when~$H$ has a sufficiently large clique to allow them to be colored properly.

\begin{rrule}\label{rrule:remove-vertex}
Let $G$ be a graph with twin decomposition $\Pi$. If there exists $P' \in \Pi$ with $N_G(P') = \emptyset$ and~$|P'| \leq \omega(H)$, then remove $P'$ from $G$.
\end{rrule}

\begin{lemma}
If $G'$ is obtained from $G$ by applying Reduction rule \ref{rrule:remove-vertex}, then $G$ is $H$-colorable if and only if $G'$ is $H$-colorable.
\end{lemma}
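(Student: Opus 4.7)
The plan is to prove both directions of the equivalence separately, with the forward direction being essentially trivial and the backward direction requiring a short construction using the clique number of $H$.

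For the forward direction, suppose $G$ is $H$-colorable via some homomorphism $f\colon V(G)\to V(H)$. Since $G'$ is obtained by deleting the vertex set $P'$, it is an induced subgraph of $G$, and the restriction of $f$ to $V(G')$ remains a proper $H$-coloring of $G'$. So $G'$ is $H$-colorable.

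For the backward direction, let $f'\colon V(G')\to V(H)$ be a proper $H$-coloring of $G'$. I want to extend $f'$ to a proper $H$-coloring $f$ of $G$ by assigning colors to the vertices of $P'$. Because $\Pi$ is a twin decomposition, all vertices in $P'$ are pairwise twins with the same closed neighborhood, and since $N_G(P')=\emptyset$ by the precondition of Reduction rule~\ref{rrule:remove-vertex}, the vertices of $P'$ are adjacent only to each other; that is, $G[P']$ is an isolated clique of $G$ of size $|P'|\leq\omega(H)$. Let $K\subseteq V(H)$ be a clique in $H$ with $|K|=\omega(H)\geq|P'|$, which exists by definition of $\omega(H)$, and pick an arbitrary injection $g\colon P'\to K$. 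Define $f(v):=f'(v)$ for $v\in V(G')$ and $f(v):=g(v)$ for $v\in P'$.

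It remains to verify that $f$ is a proper $H$-coloring of $G$. Every edge of $G$ either (i) lies entirely inside $V(G')$, where $f$ agrees with $f'$ and is therefore properly colored, or (ii) lies entirely inside $P'$, in which case its endpoints $u,v\in P'$ receive distinct colors $g(u),g(v)\in K$ by injectivity of $g$, and $\{g(u),g(v)\}\in E(H)$ because $K$ induces a clique in $H$. No edge of $G$ can have one endpoint in $P'$ and one endpoint outside $P'$, because $N_G(P')=\emptyset$. Hence $f$ is a homomorphism from $G$ to $H$, so $G$ is $H$-colorable. The only step requiring any real thought is observing that $P'$ is indeed a clique, which is immediate from the twin decomposition property combined with the fact that $G$ contains no self-loops; the rest is straightforward.
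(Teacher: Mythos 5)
Your proof is correct and follows essentially the same approach as the paper: the forward direction is by restriction, and the backward direction extends the coloring by assigning the vertices of the isolated clique $P'$ distinct colors from a clique of size $\omega(H)\geq|P'|$ in $H$. You make explicit (what the paper leaves implicit) that $P'$ is a clique because its members are pairwise twins, which is a fine addition but not a different method.
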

\begin{proof}
Let $P'$ be such that $G' = G - P'$.
Clearly, if $G$ is $H$-colorable, $G'$ remains $H$-colorable. In the other direction, suppose $G'$ is $H$-colorable. We show how to extend this coloring to $G$. Since we assumed that $|P'| \leq \omega(H)$, $H$ has a clique $X$ of size $|P'|$. Use the colors of $X$ to assign a different color to each vertex in $P'$. Since $N_G(P')=\emptyset$, this gives a proper $H$-coloring of $G$.
\end{proof}

\begin{lemma}\label{claim:decrease_twin_cover}
Applying Reduction rule \ref{rrule:remove-edge} or \ref{rrule:remove-vertex} to a graph $G$ does not increase the size of a minimum twin-cover of $G$.
\end{lemma}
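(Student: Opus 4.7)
\medskip

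\noindent\textbf{Proof plan.} The plan is to start from a minimum twin-cover $S$ of $G$ and exhibit, for each of the two rules, a twin-cover of the reduced graph $G'$ of size at most $|S|$. The two rules have essentially different flavors---Reduction rule~\ref{rrule:remove-edge} only deletes edges, while Reduction rule~\ref{rrule:remove-vertex} deletes vertices---so I would handle them separately, but both arguments hinge on the same underlying observation: the partite sets of the twin decomposition are invariant under either reduction, i.e., two vertices that are twins in $G$ remain twins in $G'$.

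For Reduction rule~\ref{rrule:remove-edge}, let $S$ be a minimum twin-cover of $G$ and let $G' := G\setminus E_G(P',P'')$. I will argue that the same set $S$ is a twin-cover of $G'$. Since $E(G')\subseteq E(G)$, every edge of $G'$ that has an endpoint in $S$ is obviously covered; the only point that needs checking is that edges $\{u,v\}\in E(G')$ whose endpoints lie outside $S$ still satisfy the twin condition in $G'$. Because $u$ and $v$ are twins in $G$, they belong to the same partite set $P_i\in\Pi$, so $P_i$ is either equal to $P'$, equal to $P''$, or disjoint from $P'\cup P''$. In all three cases $u$ and $v$ lose the same set of incident edges when passing from $G$ to $G'$ (either both lose all edges to $P''$, or both lose all edges to $P'$, or neither loses any edge), so $N_{G'}[u]=N_{G'}[v]$ and they remain twins in $G'$.

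For Reduction rule~\ref{rrule:remove-vertex}, let $P'\in \Pi$ be the removed partite set (with $N_G(P')=\emptyset$), and set $S':= S\setminus P'$; I will show $S'$ is a twin-cover of $G'=G-P'$, which suffices since $|S'|\le|S|$. Every edge of $G'$ is an edge of $G$ between two vertices of $V(G)\setminus P'$, so it is covered by $S$; because neither endpoint lies in $P'$, the covering endpoint actually lies in $S\setminus P'=S'$, or else the two endpoints are twins in $G$. In the latter case I still need to verify that they are twins in $G'$, which follows because $N_G(P')=\emptyset$ implies $N_G[u]\cap P' = \emptyset$ for every $u\in V(G)\setminus P'$, so closed neighborhoods of vertices outside $P'$ are unaffected by deleting $P'$.

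The only mildly delicate point, and the step I would focus on getting right, is the invariance of the twin partition under the two rules; once that is established, both cases are immediate from the definition of twin-cover. Note that Lemma~\ref{lem:twin_decomposition:twin_cover} is not actually needed here, but it is consistent with the picture: a minimal twin-cover interacts with each $P_i$ either by containing it entirely or by avoiding it entirely, and both reductions respect this block structure.
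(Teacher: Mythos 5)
Your proof is correct and matches the paper's argument: for Reduction rule~\ref{rrule:remove-edge} it uses the same case analysis on which partite set of $\Pi$ contains the twin pair (equal to $P'$, equal to $P''$, or disjoint from both) to show twins in $G$ remain twins after the edge deletions, and for Reduction rule~\ref{rrule:remove-vertex} it observes that restricting the cover to the surviving vertices still works. The only cosmetic difference is that you explicitly take $S' := S\setminus P'$ and invoke $N_G(P')=\emptyset$ for the vertex-deletion case, whereas the paper states the (more general) fact that any twin-cover of $G$ restricts to a twin-cover of $G-P$ directly.
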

\begin{proof}
Let $P \subseteq V(G)$. It is easy to see that if $S$ is a twin-cover of $G$, then it is also a twin-cover of $G-P$. Thereby, the statement holds for Reduction rule \ref{rrule:remove-vertex}.

Let $\Pi$ be the twin decomposition of~$G$ and let $P' \neq P'' \in \Pi$. Let $F:= E_G(P',P'')$ be the set of edges between $P'$ and $P''$. Let $S$ be a twin-cover of $G$. We show that $S$ is a twin-cover of $G \setminus F$. Clearly, any edges that were previously covered, are still covered. We show that all vertices that were twins in $G$, are also twins in $G \setminus F$ to conclude the proof. Let $u,v$ be twins in $G$, and let $P\in \Pi$ such that $u,v \in P$. If $P \neq P'$ and $P\neq P''$, it is obvious that $u$ and $v$ remain twins in $G \setminus F$. Suppose $u,v$ lie in $P'$ or in $P''$; without loss of generality, suppose $u,v\in P'$. Note that the edge $\{u,v\}$ belongs to~$E(G) \setminus F$. Then $N_{G \setminus F}[u] = N_G[u] \setminus P''$. Since $u$ and $v$ are twins in $G$, $N_G[u]\setminus P'' = N_G[v]\setminus P'' = N_{G \setminus F}[v]$. Thereby, $u$ and $v$ are twins in $G\setminus F$.  It follows that the lemma statement also holds for Reduction rule \ref{rrule:remove-edge}.
\end{proof}

\subsection{Analysis of the kernelization}\label{subsec:kernel}

\begin{theorem}\label{thm:kernel}
For any fixed non-bipartite graph $H$ (without self-loops), \pname{$H$-Coloring} parameterized by the size $k$ of a twin-cover has a kernel with $\Oh(k^{\Delta(H)})$ vertices and edges, which can be encoded in $\Oh(k^{\Delta(H)}\log{k})$ bits. Furthermore, the kernelized instance is a subgraph of the original input graph.
\end{theorem}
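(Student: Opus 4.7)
The plan is to run the obvious algorithm and analyze it. First I would compute the twin decomposition $\Pi$ in linear time via Lemma~\ref{lem:twin_decomposition:find}, invoke Reduction Rule~\ref{rrule:trivial} to catch the trivial no-instance case, and then apply Rules~\ref{rrule:remove-edge} and~\ref{rrule:remove-vertex} exhaustively. Each span check needed by Rule~\ref{rrule:remove-edge} reduces to Gaussian elimination over $\mathbb{F}_2$ on coefficient vectors of the polynomials in $L_\Pi(G)$; by Observation~\ref{obs:deg_L} and Lemma~\ref{lem:num_monomials} these vectors have polynomial length and can be built in polynomial time. Since each application of Rule~\ref{rrule:remove-edge} or~\ref{rrule:remove-vertex} strictly decreases $|V(G)| + |E(G)|$, the procedure runs in polynomial time and its output is, by construction, a subgraph of the input.

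For the size analysis I would fix a minimum twin-cover $S$ of the reduced graph $G'$; by Claim~\ref{claim:decrease_twin_cover}, $|S| \le k$. Lemma~\ref{lem:twin_decomposition:twin_cover} ensures that every twin class of $G'$ lies entirely inside or disjoint from $S$. For a class $P$ outside $S$, the twin-cover property forces $N_{G'}(P) \subseteq S$; since Rules~\ref{rrule:trivial} and~\ref{rrule:remove-vertex} cannot be applied, $|P| \le \omega(H) = \Oh(1)$ and $N_{G'}(P) \ne \emptyset$. Classes inside $S$ contribute at most $k$ vertices and at most $\binom{k}{2}$ edges among themselves, so the main task is to bound the number $m$ of classes outside $S$.

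The heart of the proof is a dimension argument. The polynomials in $L(P, G')$ for $P$ outside $S$ have degree at most $\Delta(H)$ and involve only the $\Oh(k)$ variables $\{c_{v,x} : v \in S, x \in V(H)\}$, so by Lemma~\ref{lem:num_monomials} their coefficient vectors live in a space of dimension $D = \Oh(k^{\Delta(H)})$. I would process the outside-$S$ classes in some order and maintain a greedy basis of the accumulated constraint span, arguing that each surviving class must contribute at least one new basis vector. Indeed, if $L(P, G') \subseteq \vectspan\bigl(\bigcup_{Q \ne P} L(Q, G')\bigr)$ for some surviving $P$, then since $L(Q, G')$ is unaltered when removing edges between two classes neither of which is $Q$, Rule~\ref{rrule:remove-edge} would successively become applicable to strip $P$ of all its crossing edges, after which Rule~\ref{rrule:remove-vertex} would delete $P$---contradicting exhaustion. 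This gives $m \le D = \Oh(k^{\Delta(H)})$ and hence $\Oh(k^{\Delta(H)})$ vertices overall. The hard part will be making this reduction-chain argument precise, because Rule~\ref{rrule:remove-edge} also changes the constraints $L(P'', \cdot)$ of the opposite endpoint class, so the basis must be chosen so that $L(P, G')$ can be expressed without depending on $L(P'', G')$ for at least one adjacent $P''$. A closely analogous counting argument, tracking which elements of $S$ actually appear in the essential polynomials for each outside-$S$ class, then bounds the number of crossing edges by $\Oh(k^{\Delta(H)})$, and encoding each vertex identifier in $\Oh(\log k)$ bits yields the claimed $\Oh(k^{\Delta(H)} \log k)$-bit bound.
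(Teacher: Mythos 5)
Your proposal follows essentially the same strategy as the paper: apply the three reduction rules to fixpoint, fix a minimum twin-cover of the result, observe that the constraints of the twin classes outside the cover live in a space of dimension $\Oh(k^{\Delta(H)})$, and argue that each surviving outside class must contribute to a basis of that space. Your running-time and subgraph remarks match Claim~\ref{claim:kernel:running-time}.

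The one real gap is the one you flag, and your suggested repair (``choose the basis so $L(P,G')$ can be expressed without depending on $L(P'',G')$ for at least one adjacent $P''$'') is slightly off target. The clean fix is to restrict the constraint pool over which spans are taken to $L_{tc} := \bigcup_{Q\in\Pi'} L(Q,G')$, where $\Pi'$ is the family of twin classes disjoint from the cover. Any candidate pair $(P,P'')$ for Rule~\ref{rrule:remove-edge} with $P\in\Pi'$ has $P''$ inside the cover, hence $P''\notin\Pi'$; so $L(P'',\cdot)$ never enters the pool at all, and removing $E_{G'}(P,P'')$ leaves $L(Q,G')$ intact for \emph{every} $Q\in\Pi'\setminus\{P\}$, for \emph{every} choice of adjacent $P''$. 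Thus whenever $L(P,G')\subseteq\vectspan\bigl(\bigcup_{Q\in\Pi'\setminus\{P\}}L(Q,G')\bigr)$, Rule~\ref{rrule:remove-edge} fires against each neighbouring class in turn, then Rule~\ref{rrule:remove-vertex} deletes $P$, exactly as you wanted. The paper implements this by fixing one basis $L_{tc}'\subseteq L_{tc}$ of $\vectspan(L_{tc})$ up front (rather than greedily) and then reading off, from the $\le(\Delta(H)+1)$-element vertex set indexing each basis constraint, a set $\F$ of at most $(\Delta(H)+1)\cdot|L_{tc}'|$ meta-edges between $\Pi'$ and the cover; any crossing pair $(P,P'')$ outside $\F$ is removable, which yields both $|\Pi'|\le|\F|$ and the crossing-edge bound in one stroke --- this is the rigorous version of the ``analogous counting argument'' you gesture at for the edges.
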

\begin{proof}

Let $G$ be a graph. Apply Reduction rules \ref{rrule:trivial}, \ref{rrule:remove-edge}, and \ref{rrule:remove-vertex} exhaustively. Let the resulting graph be $G'$. We show that this is a correct kernelization.

\begin{myclaim}\label{claim:kernel:running-time}
Reduction rules \ref{rrule:trivial}--\ref{rrule:remove-vertex} can exhaustively be applied in polynomial time.
\end{myclaim}
\begin{claimproof}
We can compute a twin decomposition of $G$ in linear time by Lemma~\ref{lem:twin_decomposition:find}. Computing $\omega(H)$ can be done in $\Oh(1)$ time for fixed $H$. Hence Reduction rule~\ref{rrule:trivial} can be applied in polynomial time.

The set $L_\Pi(G)$ contains at most $m := 2n\cdot n^{\Delta(H)+1}\cdot|V(H)|^{\Delta(H)+1}$ polynomial equalities (the number of ways to pick $S$, $X$ and $P$ as for the definition of $p_{P,S,X}$ and $q_{P,S,X}$), over $n \cdot |V(H)|$ variables. All polynomials we employ are multilinear. This can be verified directly from their construction, and explained by noting that squaring a number does not change it, when working modulo 2. By Lemma \ref{lem:num_monomials}, we therefore only have to consider $(n\cdot |V(H)|)^{\Delta(H)}+1$ coefficients for the polynomials. Constructing the required polynomial equalities can be done in polynomial time, for fixed~$H$. We can test if one vector lies in the span of a set of other vectors by comparing the ranks of matrices of dimensions at most $m \times ((n\cdot |V(H)|)^{\Delta(H)}+1)$. Thereby, Reduction rule \ref{rrule:remove-edge} can be applied in polynomial time. Reduction rule \ref{rrule:remove-vertex} can trivially be applied in polynomial time. Since $|\Pi| \leq |V(G)|$, checking for all $P\in \Pi$ whether any of the reduction rules can be applied takes polynomial time.

Each rule can be applied at most $|V(G)|^2$ times, as it always removes at least one edge or vertex. The claim follows.
\end{claimproof}

Let $G'$ be the result of applying  Reduction rules  \ref{rrule:trivial}, \ref{rrule:remove-edge}, and \ref{rrule:remove-vertex} exhaustively. We use the following claim to prove a bound on the size of $G'$.
\sloppy
\begin{myclaim}\label{claim:kernel:size}
The resulting graph $G'$ has $\Oh\left(|V(H)|^{\Delta(H)}\Delta(H)^2\cdot k^{\Delta(H)}\right)$  vertices and $\Oh\left(|V(H)|^{\Delta(H)}\Delta(H)^3\cdot k^{\Delta(H)}\right)$ edges.
\end{myclaim}
\fussy
\begin{claimproof}
When Reduction rule \ref{rrule:trivial} has been applied at any point, $G'$ trivially has constant size. Otherwise, since $G$ has a twin-cover of size $k$, it follows from Lemma \ref{claim:decrease_twin_cover} that $G'$ has a twin-cover of size at most $k$. Let $Y$ be a minimum twin-cover of $G'$, such that $|Y| \leq k$. Let $\Pi$ be the twin decomposition of $G'$. By Lemma \ref{lem:twin_decomposition:twin_cover}, every $P$ in $\Pi$ is either fully contained in $Y$, or disjoint from $Y$. Let $\Pi' := \{P \in \Pi \mid P \cap Y = \emptyset\}$. Define $L_{tc} := \bigcup_{P \in \Pi'} L(P, G')$, and note that $N_{G'}(P) \subseteq Y$ for all $P \in \Pi'$. This implies the polynomial equalities in~$L_{tc}$ only involve the variables controlling the coloring of~$Y$. By Observation \ref{obs:deg_L}, the polynomials in $L_{tc}$ have degree at most $\Delta(H)$ and they use at most $|V(H)|$ variables for each of the~$k$ vertices in $Y$. Define \[\alpha := (k\cdot |V(H)|)^{\Delta(H)} +1.\]

Let $L_{tc}' \subseteq L_{tc}$ be a basis of the vectors of $L_{tc}$, working modulo 2. Since all employed polynomials are multilinear, it follows that the vectors in~$L_{tc}$ only have nonzero coefficients for positions corresponding to multilinear monomials, of which there are at most~$\alpha$ by Lemma \ref{lem:num_monomials}. As the size of the basis~$L_{tc}'$ equals the rank of the matrix containing the (row)vectors~$L_{tc}$, which is upper-bounded by the number of columns that contain a nonzero entry, it follows that $|L_{tc}'| \leq \alpha$.

We define a set of meta-edges $\F \subseteq (\Pi' \times (\Pi\setminus \Pi'))$ based on the constraints in $L'_{tc}$. For each constraint
$Z$ in $L'_{tc}$, do the following.
\begin{itemize}
\item Suppose $Z = p_{P',S,X}(\vect{C})\equiv_2 0$ for some $P' \in \Pi'$, $S \subseteq N_G(P')$ and $X \subseteq V(H)$. Since~$P'$ is a partite set of twins that is disjoint from~$Y$, we have~$N_G(P') \subseteq Y$ since~$Y$ is a twin cover. So each~$v\in S$ belongs to a partite set~$P_v$ of twins with~$P_v \in \Pi \setminus \Pi'$. For each~$v \in S$, add $(P',P_v)$ to~$F$.
\item Otherwise, $Z = q_{P',S,X}(\vect{C})\equiv_2 0$ for some $P' \in \Pi'$, $S \subseteq N_G(P')$, and sequence $X = x_1,\ldots,x_k \in V(H)$. Similarly as above, for each $v \in S$ take $P_v \in \Pi\setminus \Pi'$ such that $v \in P_v$ and add $(P',P_v)$ to~$F$.
\end{itemize}
The above procedure adds at most $\Delta(H)+1$ meta-edges for each constraint in $L_{tc}'$. Thereby,
\begin{equation}\label{eq:bound_on_F}|F| \leq \alpha(\Delta(H) + 1).\end{equation}
We now argue that for any $(P',P'')\notin F$ with $P'\in \Pi'$ and $P'' \in \Pi\setminus \Pi'$, the following holds:
\begin{equation}\label{eq:Ltcprime}
L_{tc}' \subseteq L_{\Pi}(G'\setminus E_{G'}(P',P'')).
\end{equation}
To see this, consider a constraint in~$L_{tc}'$. It is of one of two possible types, and it was added to~$L_{tc} = \bigcup_{P \in \Pi'} L(P, G') \supseteq L'_{tc}$ because it satisfied the criteria described in Section~\ref{subsection:list_of_equalities}. Effectively, the constraint was created because some set~$P \in \Pi'$ contains a certain vertex set~$S$ of size at most~$\Delta(H) + 1$ in its open neighborhood in~$G'$. But by our choice of meta-edges~$F$, the set~$P$ still has~$S$ in its neighborhood in~$G' \setminus E_{G'}(P, P'')$, so that all constraints of~$L'_{tc}$ are also contained in~$L_{\Pi}(G' \setminus E_{G'}(P',P''))$.

Using this, we show that for all $P'\in\Pi'$ and $P''\in\Pi\setminus \Pi'$:
\begin{equation}\label{eq:correct_meta-edges}
E_{G'}(P',P'')\neq \emptyset \Rightarrow (P',P'') \in \F.
\end{equation}
Suppose there exist $P' \in \Pi',P''\in\Pi\setminus \Pi'$ such that $E_{G'}(P',P'')\neq \emptyset$ but $(P',P'') \notin \F$. It follows from Equation \ref{eq:Ltcprime} that $L_{tc}' \subseteq L_{\Pi}(G'\setminus E_{G'}(P',P''))$. Thereby,
\[\vectspan(L_{\Pi}(G'\setminus E_{G'}(P',P''))) \supseteq \vectspan(L_{tc}')\supseteq L_{tc}\supseteq L(P',G').\]
Thereby, Reduction rule \ref{rrule:remove-edge} could be applied to $G'$, which is a contradiction. It follows that $P'\in\Pi'$ and $P'' \in \Pi\setminus \Pi'$ can only be connected in $G'$, if there is a corresponding meta-edge in $\F$.
We can now use Equations \ref{eq:bound_on_F} and \ref{eq:correct_meta-edges} to bound the number of vertices and edges in $G'$.

First of all, for all $P' \in \Pi'$ there must exist some $P'' \in \Pi\setminus \Pi'$ such that $(P',P'')\in \F$, otherwise it follows from Equation \ref{eq:correct_meta-edges} that $N_{G'}(P') = \emptyset$ and $P'$ would have been removed by Reduction rule \ref{rrule:remove-vertex}. Thereby $|\Pi'| \leq |\F|$. Since $|P| \leq \omega(H) \leq \Delta(H)+1$ for all $P \in \Pi$ by Reduction rule~\ref{rrule:trivial}, the number of vertices of $G'$ can be bounded as follows.
\begin{align*}
 |V(G')| \leq |\F| \cdot (\Delta(H)+1) + |Y|  &\leq \left((k|V(H)|)^{\Delta(H)}+1\right)\cdot(\Delta(H)+1)^2 + k \\
 &= \Oh\left(|V(H)|^{\Delta(H)}\Delta(H)^2\cdot k^{\Delta(H)}\right).
\end{align*}

If edge $\{u,v\} \in G'$ with $u \in Y$ and $v \notin Y$, then there exist $(P',P'')\in\F$ such that $ u\in P'$, $v \in P''$. Since $|P| \leq \Delta(H)+1$ for any $P \in \Pi$, there are at most $|\F| \cdot (\Delta(H)+1)^2$ such edges. Furthermore, there are at most $\binom{|Y|}{2}\leq k^2$ edges between vertices in $Y$, and at most $|\F|\cdot (\Delta(H)+1)^2$ edges between vertices in $V(G) \setminus Y$. Thereby, the total number of edges can be bounded by:
\begin{align*}
|E(G')| &\leq |\F| \cdot (\Delta(H)+1)^2 + |Y|^2 + |\F| \cdot (\Delta(H)+1)^2\\
& \leq 2\alpha(\Delta(H)+1)^3 + k^2\\
&= 2 \left((k\cdot |V(H)|)^{\Delta(H)} +1\right)(\Delta(H)+1)^3 + k^2\\
\intertext{(note that $\Delta(H) \geq 2$ for non-bipartite $H$)}
&= \Oh\left(|V(H)|^{\Delta(H)}\Delta(H)^3\cdot k^{\Delta(H)}\right).
\end{align*}
This concludes the proof of Claim~\ref{claim:kernel:size}.
\end{claimproof}
It follows from the correctness of Reduction rules \ref{rrule:trivial}, \ref{rrule:remove-edge}, and \ref{rrule:remove-vertex} that $G'$ is $H$-colorable if and only if $G$ is $H$-colorable. It follows from Claims \ref{claim:kernel:running-time} and \ref{claim:kernel:size} that we have given a kernel for $H$-coloring with  $\Oh(k^{\Delta(H)})$ vertices and edges for constant-size $H$ that can be computed in polynomial time. By encoding the graph using adjacency lists, it can be encoded in $\Oh(k^{\Delta(H)} \cdot \log{k})$ bits.
\end{proof}

The following corollary shows that Theorem \ref{thm:kernel} generalizes the result obtained for \qcoloring parameterized by vertex cover in the extended abstract of this work.

\begin{corollary}\label{corollary:kernel:q-coloring}
For any constant $q \geq 3$, \pname{$q$-Coloring} parameterized by the size of a twin-cover has a kernel with $O(k^{q-1})$ vertices, which can be encoded in $O(k^{q-1}\log{k})$
 bits. Furthermore, the resulting instance is a subgraph of the original input graph.
 \end{corollary}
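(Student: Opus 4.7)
The plan is to derive the corollary as an immediate consequence of Theorem~\ref{thm:kernel} by instantiating it with $H = K_q$. Recall the well-known equivalence that a graph $G$ has a proper $q$-coloring if and only if there is a homomorphism from $G$ to the clique $K_q$ on $q$ vertices (this is noted already in the Preliminaries). Thus an instance of \pname{$q$-Coloring} is an instance of \pname{$K_q$-Coloring}, with the same twin-cover.

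Next I would verify that the hypotheses of Theorem~\ref{thm:kernel} apply to $H = K_q$: for $q \geq 3$, the graph $K_q$ contains a triangle and so is non-bipartite, and by definition it has no self-loops. Hence Theorem~\ref{thm:kernel} yields a kernel that is a subgraph of the input graph, with $\Oh(k^{\Delta(K_q)})$ vertices and edges, encodable in $\Oh(k^{\Delta(K_q)} \log k)$ bits.

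The only remaining observation is that $\Delta(K_q) = q-1$, since every vertex of $K_q$ is adjacent to all $q-1$ other vertices. Substituting this into the bounds from Theorem~\ref{thm:kernel} gives $\Oh(k^{q-1})$ vertices and $\Oh(k^{q-1} \log k)$ bits, and preserves the property that the output is a subgraph of the input. There is no real obstacle here; the content of the corollary is essentially a restatement, and the work has already been done in the proof of Theorem~\ref{thm:kernel}.
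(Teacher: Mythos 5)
Your proof is correct and follows exactly the paper's approach: instantiate Theorem~\ref{thm:kernel} with $H = K_q$, check that $K_q$ is non-bipartite and loop-free for $q \geq 3$, and substitute $\Delta(K_q) = q - 1$ into the bounds. Nothing to add.
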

\begin{proof}
Since \pname{$q$-Coloring} is equivalent to \pname{$K_q$-Coloring}, and $\Delta(K_q) = q-1$ and $K_q$ has $q$ vertices, the result now follows directly from Theorem \ref{thm:kernel}. 
\end{proof}

\section{Sparsification lower bound for \texorpdfstring{$3$}{3}-Coloring}

In this section we provide a sparsification lower bound for \pname{$3$-Coloring}. We show that \ThreeColoring does not have a (generalized) kernel of size $\Oh(n^{2-\varepsilon})$, unless \containment. This will also provide a kernel lower bound for \ThreeColoring parameterized by the size of a twin-cover, that matches the upper bound given in the previous section up to $k^{o(1)}$ factors.

For ease of presentation, we will prove the lower bound by giving a degree-$2$ cross-composition from a tailor-made problem to \pname{$3$-List Coloring}. The input to \pname{$3$-List Coloring} is a graph $G$ together with a function $L$ that assigns to each vertex~$v$ a list~$L(v) \subseteq \{1,2,3\}$. The problem asks whether there exists a proper coloring of $G$, such that each vertex is assigned a color from its list.
Before presenting the cross-composition, we introduce an important gadget that will be used. It was constructed by Jaffke and Jansen \cite{JaffkeJ17Fine}. The gadget, which we will call a \newgadget, will be used to forbid one specific coloring of a given vertex set. The following Lemma is a rephrased version of Lemma 15 in \cite{JaffkeJ17Fine}.

\sloppy
\begin{lemma}
\label{lem:blocking_gadget}
There is a polynomial-time algorithm that, given $\vect{c} = (c_1,\ldots,c_m) \in [3]^m$, outputs a \pname{$3$-List-Coloring} instance with $\Oh(m)$ vertices called $\newgadget(\vect{c})$ that contains distinguished vertices $(\pi_1,\ldots,\pi_m)$. A coloring $f\colon \{\pi_i\mid i\in [m]\} \rightarrow [3]$ can be extended to a proper list coloring of $\newgadget(\vect{c})$ if and only if $f(\pi_i) = c_i$ for some $i \in [m]$.
\end{lemma}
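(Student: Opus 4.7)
The plan is to build the gadget in two layers: first, per-coordinate \emph{color detectors} that encode an indicator $y_i$ for the event ``$f(\pi_i)=c_i$''; and second, a logical OR-tree over the indicators $y_1,\dots,y_m$ whose output is forced to be true. The enabling fact is that \pname{$3$-List-Coloring} admits a constant-size OR-gadget: if we designate two colors $\{T,F\} := \{1,2\}$ as Boolean truth values and use $3$ as an auxiliary, there is a standard constant-size subgraph on two input and one output vertex (all with list $\{T,F\}$) that is list-colorable exactly when the output equals the Boolean OR of the inputs. A balanced binary tree of such gadgets over $m$ inputs has $\Oh(m)$ internal nodes, so the total vertex count remains $\Oh(m)$.

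For each $i$, I would introduce a private vertex $y_i$ with list $\{T,F\}$ and attach it to $\pi_i$ through a constant-size subgadget enforcing the one-sided implication ``$y_i = T \Rightarrow f(\pi_i) = c_i$.'' Concretely, this is a short finite case analysis over the three possible values of $c_i\in[3]$: one adds one or two auxiliary vertices whose lists can be satisfied precisely when it is \emph{not} the case that simultaneously $y_i=T$ and $f(\pi_i)\in [3]\setminus\{c_i\}$. The converse direction is not required, so $y_i$ is free to take either value whenever $f(\pi_i)=c_i$.

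The second layer feeds the $y_i$'s into a balanced binary OR-tree, with every internal node being a copy of the OR-gadget above and the leaves being the $y_i$. Finally, attach a constant-size enforcer to the root that forces its color to $T$ (e.g.\ a single pendant vertex with list $\{F\}$). Composing the two layers, an assignment $f$ to the $\pi_i$'s extends to a proper list coloring of the whole gadget if and only if the root can be colored $T$, if and only if at least one $y_i$ can be $T$, if and only if $f(\pi_i)=c_i$ for some $i\in[m]$. The vertex count is $\Oh(m)$ since each detector and each OR-gadget has constant size and the tree has $\Oh(m)$ nodes, and the whole construction is plainly computable in polynomial time.

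The main obstacle is the detector step: one must realise the one-sided implication using only \pname{$3$-List-Coloring} constraints and a number of vertices independent of~$m$. OR-gadgets for \pname{$3$-List-Coloring} are folklore, and the balanced-tree analysis is routine once the indicators behave correctly; the detector requires the most care, but reduces to a small case check over the three values of $c_i$ and the two of $y_i$, so it poses no fundamental difficulty.
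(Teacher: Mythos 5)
The paper does not prove this lemma itself: it is stated as a rephrased form of Lemma~15 of Jaffke and Jansen and used as a black box, so there is no in-paper argument to compare against. Your construction is a legitimate self-contained proof, and it appears to be a (minor) variant of the original: Jaffke--Jansen chain constant-size subgadgets linearly to accumulate a running OR of the $m$ indicator events and then force the final link, whereas you arrange the same constant-size pieces as a balanced binary OR-tree. Both give $\Oh(m)$ vertices; the tree shape is cosmetic.

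Two points of imprecision that are worth flagging, neither fatal. First, the folklore 3-coloring OR-gadget (a triangle with two auxiliary vertices, each adjacent to one input, the third vertex being the output with list $\{T,F\}$) does \emph{not} enforce $o = a \lor b$ exactly: when at least one input is $T$ the output can be either $T$ or $F$. The property it does guarantee is one-sided, namely $o$ \emph{can} be colored $T$ iff at least one input is $T$, and $o$ \emph{must} be $F$ when both inputs are $F$. Your claim that it is ``list-colorable exactly when the output equals the Boolean OR'' overstates this, but your subsequent argument only ever uses the one-sided direction (root forced to $T$ propagates some leaf being $T$ downward; some leaf $T$ propagates a valid $T$ up to the root), so the proof survives. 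Second, the detector realizing $y_i = T \Rightarrow f(\pi_i) = c_i$ needs one small case check you gloss over: one of the two forbidden colors in $[3]\setminus\{c_i\}$ may coincide with the color you chose as $T$, in which case the naive ``single auxiliary vertex with a two-element list adjacent to both $y_i$ and $\pi_i$'' trick degenerates. A pair-forbidding gadget with two auxiliary vertices (one adjacent to $y_i$, one to $\pi_i$, joined by an edge, with lists chosen so both are forced to the same third color precisely in the forbidden configuration) handles every case uniformly in $\Oh(1)$ vertices, so the gap is closable, but as written this is the one step that genuinely needs to be spelled out rather than waved at.

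Aside from these, the chain of equivalences and the $\Oh(m)$ size bound are correct, and the construction is clearly polynomial-time computable.
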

\fussy
The \newgadget can be used to forbid one specific coloring given by the tuple $\vect{c}$ of a set of vertices $v_1,\ldots,v_m$, by adding a $\newgadget(\vect{c})$ and connecting $\pi_i$ to $v_i$ for all $i \in [m]$. If the color of $v_i$ is $c_i$ for all $i$, then the inserted edges prevent all~$\pi_i$ to receive the corresponding color~$c_i$, and by Lemma \ref{lem:blocking_gadget} the coloring cannot be extended to the gadget. If however the color of $v_i$ differs from $c_i$ for some $i$, the gadget can be properly colored.

Having presented the gadget we use in our construction, we define the source problem for the cross-composition. This problem was also used as the starting problem for a cross-composition in our earlier sparsification lower bound for \pname{$4$-Coloring}~\cite{JansenP2016Sparsification}.

\defproblem{\pname{$2$-$3$-Coloring with Triangle Split  Decomposition}\cite{JansenP2016Sparsification}}
{A graph $G$ with a partition of its vertex set into $U \cup V$ such that $G[U]$ is an edgeless graph and $G[V]$ is a disjoint union of triangles.}
{Is there a proper $3$-coloring $c:V(G) \to \{1,2,3\}$ of $G$, such that $c(u)~\in~\{1,2\}$ for all $u \in U$? We will refer to such a coloring as a \emph{2-3-coloring} of the graph $G$, since two colors are used to color $U$, and three to color $V$.}

\begin{lemma}[{\cite[Lemma 3]{JansenP2016Optimal}}]
\label{lem:NP_complete_coloring}
 \pname{$2$-$3$-Coloring with Triangle Split Decomposition}  is NP-complete.
\end{lemma}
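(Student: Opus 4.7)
The plan is to establish Lemma~\ref{lem:NP_complete_coloring} in the usual two steps: membership in NP is immediate, since a candidate $2$-$3$-coloring can be verified in linear time by checking every edge and the $\{1,2\}$-constraint on $U$. For NP-hardness I would reduce from monotone NAE-$3$-SAT (equivalently, the problem of $2$-coloring a $3$-uniform hypergraph), which is well known to be NP-complete. This source is attractive because the restriction ``$c(u)\in\{1,2\}$ for all $u \in U$'' exactly mirrors the boolean nature of a NAE-SAT assignment, while the requirement that each triangle of $V$ use all three colors from $\{1,2,3\}$ will let a single triangle-plus-edges gadget encode the not-all-equal constraint on one clause.

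Given a monotone NAE-$3$-SAT formula with variables $x_1,\dots,x_n$ and clauses $C_j=(x_{i_{j,1}}\vee x_{i_{j,2}}\vee x_{i_{j,3}})$ for $j\in[m]$, I would build $G$ as follows. Introduce one vertex $v_i\in U$ for each variable. For each clause $C_j$ create a vertex-disjoint triangle $T_j = \{t_{j,1},t_{j,2},t_{j,3}\}\subseteq V$, and add the edge $\{v_{i_{j,k}},t_{j,k}\}$ for each $k\in[3]$. By construction $G[U]$ is edgeless and $G[V]$ is a disjoint union of triangles, so the partition $U\cup V$ is a valid triangle split decomposition, and the reduction clearly runs in polynomial time.

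For correctness, identify color $1$ with ``true'' and color $2$ with ``false''. In any $2$-$3$-coloring, each triangle $T_j$ uses all three colors $\{1,2,3\}$ and each $t_{j,k}$ is forbidden the color of its neighbor $v_{i_{j,k}}\in\{1,2\}$; if all three neighbors carried color $1$ the triangle would be confined to $\{2,3\}$, which is impossible, and the all-$2$ case is symmetric. Hence the induced boolean assignment to $x_1,\ldots,x_n$ satisfies every clause in the not-all-equal sense. Conversely, given an NAE-satisfying assignment I would color each $v_i$ by the corresponding truth value; in every clause at least one $v_{i_{j,k}}$ carries the minority color, and then assigning \emph{its} triangle neighbor the majority color while giving the remaining two triangle vertices the minority color and color $3$ yields a proper $3$-coloring of $T_j$ compatible with all forbidden-color constraints. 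The only step requiring any care is this converse case analysis, but with exactly three literals per clause it collapses to a single color pattern up to symmetry, so no auxiliary gadget (and in particular no \newgadget) is required.
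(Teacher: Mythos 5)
This lemma is cited from the authors' earlier paper rather than proved here, so there is no in-paper argument to compare against; your reduction from monotone NAE-$3$-SAT (one vertex in $U$ per variable, one triangle in $V$ per clause, and an edge from each triangle vertex to the vertex of the corresponding literal occurrence) is correct, with both directions of the equivalence holding exactly as you argue---a violated NAE clause confines its triangle to two colors, while an NAE-satisfying assignment always leaves a valid three-coloring of each triangle. This is the natural construction (and, as far as I can tell, essentially the one used in the cited source), and you are right that no auxiliary gadgets are needed for this hardness proof.
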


To establish a quadratic lower bound on the size of generalized kernels, it suffices to give a degree-$2$ cross-composition from this special coloring problem into \pname{$3$-Coloring}. Effectively, we have to show that for any~$t$, one can efficiently embed a series of~$t$ size-$n$ instances indexed as~$X_{i,j}$ for~$i,j \in [\sqrt{t}]$, into a single \pname{$3$-Coloring} instance with~$\Oh(\sqrt{t} \cdot n^{\Oh(1)})$ vertices that acts as the logical \textsc{or} of the inputs. To achieve this composition, a common strategy is to construct vertex sets $S_i$ and $T_i$ of size~$n^{\Oh(1)}$ for $i \in [\sqrt{t}]$, such that the graph induced by $S_i \cup T_j$ encodes input~$X_{i,j}$. The fact that the inputs can be partitioned into an independent set and a collection of triangles facilitates this embedding; we represent the independent set within sets~$S_i$ and the triangles in sets~$T_i$. To embed~$t$ inputs into a graph on~$\Oh(\sqrt{t} \cdot n^{\Oh(1)})$ vertices, each vertex will have incident edges corresponding to many different input instances. The main issue when trying to find a cross-composition into \pname{$3$-Coloring}, is to ensure that when there is one $2$-$3$-colorable input graph, the entire graph becomes $3$-colorable. This is difficult, since the neighbors that a vertex in $S_i$ has among the many different sets~$T_j$ should not invalidate the coloring. For vertices in some set $T_j$, we have a similar issue. Our choice of starting problem ensures that if some combination~$S_{i^*}, T_{j^*}$ corresponding to input~$X_{i^*,j^*}$ has a $2$-$3$-coloring, then the remaining sets~$T_j$ can be safely colored~$3$, since vertices in $S_{i^*}$ will use only two of the available colors. The key insight to ensure that vertices in the remaining $S_i$ can also be colored, is to split them into multiple copies that each have at most one neighbor in any $T_j$. There will be at most one vertex in the neighborhood of a copy that is colored using color $1$ or $2$, thereby we can always color it using the other available color. Finally, additional gadgets will ensure that in some $S_i$ all these copies get equal colors, and in some $T_j$ the vertices that correspond to a triangle in the inputs are properly colored as such. With this intuition, we give the construction.

\begin{theorem} \label{thm:threecoloring:lowerbound}
\ThreeColoring parameterized by the number of vertices $n$ does not have a generalized kernel of size $\Oh(n^{2-\varepsilon})$ for any $\varepsilon > 0$, unless \containment.
\end{theorem}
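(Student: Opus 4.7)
The plan is to establish the lower bound by giving a degree-$2$ \OR-cross-composition from \pname{$2$-$3$-Coloring with Triangle Split Decomposition}, NP-hard by Lemma~\ref{lem:NP_complete_coloring}, into the auxiliary problem \pname{$3$-List Coloring}, and then to transfer the bound to \pname{$3$-Coloring} by the standard palette trick: attach a triangle on fresh vertices $a_1,a_2,a_3$ whose colours are pinned by a blocking gadget, and for each listed vertex $v$ add an edge from $v$ to $a_c$ for every $c \notin L(v)$. Applying Theorem~\ref{thm:cross_composition_LB} then gives the claimed $\Oh(n^{2-\varepsilon})$ lower bound.

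First, declare two source instances equivalent whenever they share the same $|U|$ and the same multiset of triangles in $G[V]$ (up to a canonical labelling); malformed or singleton classes are routed to a trivial NO-instance. After padding, the class contains $t = r^2$ equivalent instances re-indexed as $X_{i,j}$ for $(i,j) \in [r]^2$ with $r = \Oh(\sqrt{t})$. The composed instance is built as follows. For every column $j \in [r]$, make a copy $T_j$ of $V$ (including all triangles of $G[V]$) whose vertices carry list $\{1,2,3\}$. For each row $i \in [r]$, each $u \in U$, and each pair $(j,v)$ such that $\{u,v\} \in E(X_{i,j})$, introduce a private copy $s^{(j,v)}_{i,u}$ with list $\{1,2\}$, adjacent only to $t_{j,v}$. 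This \emph{split into single-neighbour copies}, inspired by our earlier $4$-Coloring construction~\cite{JansenP2016Sparsification}, guarantees that every $s^{(j,v)}_{i,u}$ has exactly one neighbour in the whole graph.

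Using Lemma~\ref{lem:blocking_gadget}, attach blocking gadgets that jointly enforce the following OR-semantics: there exist $i^* \in [r]$ and $j^* \in [r]$ such that (a)~for every $u \in U$, all copies $s^{(j,v)}_{i^*,u}$ of $u$ in row $i^*$ are assigned the same colour in $\{1,2\}$, and (b)~every triangle of $V$ inside $T_{j^*}$ is properly $3$-coloured; for $i \neq i^*$ the copy-equality is relaxed, and for $j \neq j^*$ the triangle-properness is relaxed. The indices $i^*$ and $j^*$ are selected by a small number of pointer vertices controlled by conditional blocking gadgets. The total vertex count is $\Oh(r \cdot \mathrm{poly}(n))$, well within the $\sqrt{t}\log t$ cost budget. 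For the forward direction, given a $2$-$3$-colouring $c$ of $X_{i^*,j^*}$: colour $T_{j^*}$ by $c|_V$; for $j \neq j^*$ colour all of $T_j$ with colour $3$ (allowed because the triangle gadget of $T_j$ is deactivated); colour each $s^{(j,v)}_{i^*,u}$ by $c(u)$, which differs from its neighbour's colour (either $c(v) \neq c(u)$ because $c$ is proper, or $3$ if $j \neq j^*$); and for $i \neq i^*$, colour each copy $s^{(j,v)}_{i,u}$ independently as the unique element of $\{1,2\}$ different from its one neighbour (the equality gadget for row $i$ is deactivated). Conversely, from any proper list-colouring the pointer gadgets pin indices $(i^*, j^*)$ so that the restriction to $S_{i^*} \cup T_{j^*}$ endowed with the edges of $X_{i^*,j^*}$ is a $2$-$3$-colouring of that input.

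The principal obstacle is the joint engineering of the pointer and conditional blocking gadgets inside a $\{1,2,3\}$-palette: unlike the $4$-Coloring case~\cite{JansenP2016Sparsification}, no spare colour is available to serve as a ``disable'' signal, so each gadget must be realised via the distinguished vertices of Lemma~\ref{lem:blocking_gadget} in a way that is simultaneously compatible with the intended forward colouring \emph{and} leaves non-selected rows and columns unconstrained. The single-neighbour split of the $S_i$-copies supplies the slack that makes this possible, and the bulk of the technical work will lie in verifying that each conditional gadget respects this budget while still enforcing the OR-semantics above.
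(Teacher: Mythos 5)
Your high-level plan---cross-compose from \pname{$2$-$3$-Coloring with Triangle Split Decomposition} into \pname{$3$-List Coloring}, then fold in a palette triangle, with pointer vertices and blocking gadgets selecting a row $i^*$ and column $j^*$---matches the paper. But the construction of the $S$-part has a flaw that destroys the size bound that a degree-$2$ cross-composition requires.

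You create a private copy $s^{(j,v)}_{i,u}$ for every tuple $(i,u,j,v)$ with $\{u,v\}\in E(X_{i,j})$, so that each copy has a \emph{single} neighbour in the whole graph. Counting these: for each fixed row $i$, the number of such tuples is $\sum_{j\in[r]} |E(X_{i,j})|$, which can be $\Theta(r \cdot nm)$. Summing over $i \in [r]$ gives $\Theta(r^2 \cdot nm) = \Theta(t \cdot nm)$ vertices in the $S$-part alone. That is $\Omega(t)\cdot\mathrm{poly}(n)$, not $\Oh(\sqrt{t})\cdot\mathrm{poly}(n)$; your claim that the total vertex count is $\Oh(r\cdot\mathrm{poly}(n))$ does not hold, and Theorem~\ref{thm:cross_composition_LB} cannot be invoked. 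In effect, by privatising one copy per $(j,v)$ you have merely laid all $t$ inputs side by side, giving no compression at all.

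The resolution in the paper is to make each copy \emph{shared across all columns $j$}, not across none. Concretely, the paper creates one vertex $s^i_{k,\ell}$ for every $i\in[\sqrt{t}]$, $k\in[3n]$, $\ell\in[m]$ (so $|S| = \sqrt{t}\cdot 3nm$), and connects $s^i_{k,\ell}$ to $t^j_k$ for \emph{every} $j$ such that $\{u_\ell,v_k\}\in E(X_{i,j})$. The split therefore guarantees only ``at most one neighbour in each cell $T_j$'', not ``at most one neighbour overall''. Sharing is safe precisely because every $T_j$ with $j\neq j^*$ is coloured entirely $3$, so among the up-to-$\sqrt{t}$ neighbours of $s^i_{k,\ell}$ only the one in $T_{j^*}$ can ever conflict with a colour from $\{1,2\}$. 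Your ``exactly one neighbour'' condition is strictly stronger than what the argument needs, and paying for it blows the budget. You should replace the $(i,u,j,v)$-indexed copies with the $(i,\ell,k)$-indexed ones and let each such vertex carry all its cross-column edges; with that change the rest of your argument (equality gadgets guarded by $a_{i^*}$, triangle gadgets guarded by $b_{j^*}$, the palette triangle, and the two colouring directions) goes through as you sketched.
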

\begin{proof}
To prove this statement, we give a degree-$2$ cross-composition from \pname{$2$-$3$-Coloring with triangle split decomposition} to \pname{$3$-List Coloring} and then show how to change this instance into a \pname{$3$-Coloring} instance. We start by defining a polynomial equivalence relation $\eqvr$ on instances of \pname{$2$-$3$-Coloring with triangle split decomposition}. Let two instances be equivalent under \eqvr, when the sets $U$ have the same size and sets $V$ consist of the same number of triangles. It is easy to verify that \eqvr is a polynomial equivalence relation.

By duplicating one of the inputs several times if needed, we ensure that the number of inputs to the cross-composition is a square. This increases the number of inputs by at most a factor four and does not change the value of the \textsc{or}. Therefore, assume we are given $t$ instances of \pname{$2$-$3$-Coloring with Triangle Split Decomposition} such that $\q := \sqrt{t}$ is integer. Enumerate these instances as $X_{i,j}$ for $i,j \in [\q]$ and let instance $X_{i,j}$ have graph $G_{i,j}$. For input instance $X_{i,j}$, let $U$ and $V$ be such that $U$ is an independent set with $|U| = m$ and $V$ consists of $n$ vertex-disjoint triangles. Enumerate the vertices in $U$ as $u_1,\ldots,u_m$ and in $V$ as $v_1,\ldots,v_{3n}$ such that $v_{3k-2},v_{3k-1},v_{3k}$ form a triangle for $k \in [n]$. We now create an instance of the \pname{$3$-List Coloring} problem, consisting of a  graph $G'$ together with a list function $L$ that assigns a subset of the color palette $\{1,2,3\}$ to each vertex.

Refer to Figure \ref{fig:4-coloring} for a sketch of $G'$.

\begin{figure}
\begin{minipage}[t]{.75\linewidth}
\centering\includegraphics{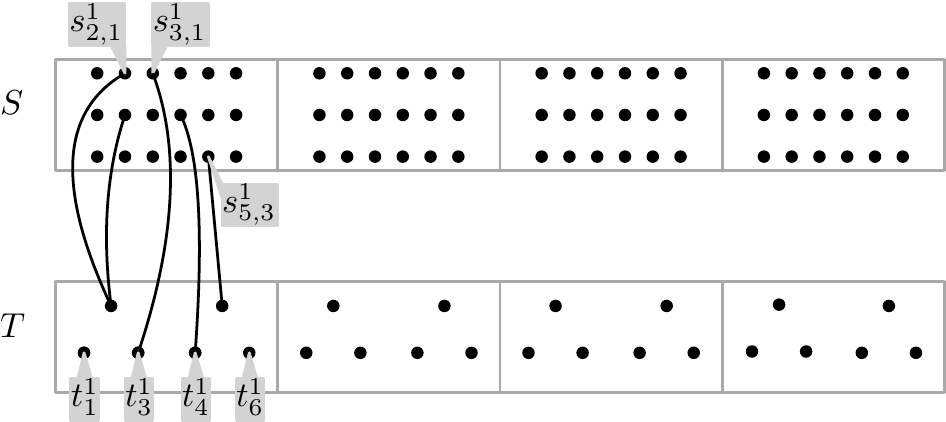}
\subcaption{Constructed graph $G'$}\label{fig:4-coloring_G}
\end{minipage}%
\quad
\begin{minipage}[t]{.2\linewidth}
\centering\includegraphics{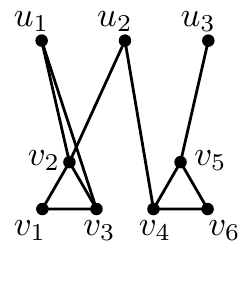}
\subcaption{Instance $X_{1,1}$}\label{fig:4-coloring_X}
\end{minipage}
\caption{Construction of graph $G'$ for $\q = 4$, $m = 3$, and $n=2$. Edges between vertices in $S$ and $T$ are shown for instance $X_{1,1}$. All \newgadgets and the vertex sets $A$ and $B$ are left out.}\label{fig:4-coloring}
\end{figure}

\begin{enumerate}
\item \label{step:S} Initialize~$G'$ as the graph containing $\q$ sets of $m\cdot 3n$ vertices each, called $S_i$ for $i~\in~[\q]$. Label the vertices in each of these sets as $s^i_{k,\ell}$ for $i \in [\q]$, $k \in [3n]$ and $\ell \in [m]$. Define $L(s^i_{k,\ell})~:=~\{1,2\}$.  The vertices $s^i_{1,\ell}, s^i_{2,\ell},\ldots ,s^i_{3n, \ell}$ together represent a single vertex of the independent set of an input instance, which is split into copies to ensure that every copy has at most one neighbor in each cell of $T$ (the bottom row in Figure \ref{fig:4-coloring_G}).
\item \label{step:T} Add $\q$ sets of $3n$ vertices each, labeled $T_j$ for $j \in [\q]$. Label the vertices in $T_j$ as $t_k^j$ for $k \in [3n]$ and let $L(t_k^j) := \{1,2,3\}$. Vertices $t_{3k-2}^j,t_{3k-1}^j,t_{3k}^j$ correspond to a triangle in an input graph. They are not connected, so that we can safely color all vertices that do not correspond to a $3$-colorable input with color $3$.
\item \label{step:connect} Connect vertex $s^i_{k,\ell}$ to vertex $t^j_k$ if in graph $G_{i,j}$ vertex $u_\ell$ is connected to $v_k$, for $k \in [3n]$ and $\ell \in [m]$. By this construction, the graph $G_{i,j}$ is isomorphic to the graph obtained from~$G'[S_i \cup T_j]$ by replacing each triple $t^j_{3k-2}, t^j_{3k-1},t^j_{3k}$ by a triangle for $k \in [n]$ and merging all $3n$ vertices $s^i_{k,\ell}$ in $S_i$ that have the same value for $\ell \in [m]$.
\item \label{step:buttons} Add vertex sets $A = \{a_1,\ldots,a_{\q}\}$ and $B:= \{b_1,\ldots,b_{\q}\}$. These are used to choose indices $i$ and $j$ such that $G_{i,j}$ is $3$-colorable. Let $L(a_i) := L(b_i) := \{1,2\}$ for all $i \in [\q]$.
\item \label{step:gadgets_buttons_A} Let $\vect{c}$ be defined by $c_i := 2$ for all $i \in [\q]$. Add a $\newgadget(\vect{c})$ to $G'$. Connect vertex $a_i$ to the distinguished vertex $\pi_i$ of this \newgadget for all $i \in [\q]$.
\item \label{step:gadgets_buttons_B} Let $\vect{c}$ again be defined by $c_i := 2$ for all $i \in [\q]$. Add a $\newgadget(\vect{c})$ to $G'$. Connect vertex $b_j$ to $\pi_j$ for all $j \in [\q]$. Together with the previous step, this ensures that in any proper list coloring at least one vertex in $A$ and at least one vertex in~$B$ has color $1$.
\item  \label{step:gadgets_S} For every $i \in [\q]$, $\ell \in [m]$, and $k \in [3n-1]$, for every  $c_1,c_2 \in [2]$ with $c_1 \neq c_2$, add a $\newgadget((c_1,c_2,1))$ to $G'$. Connect  $s_{k,\ell}^i$ to $\pi_1$, $s_{k+1,\ell}^i$ to $\pi_2$, and $a_i$ to $\pi_3$. This ensures that when $a_i$ has color $1$, vertices $s^i_{k,\ell}$ and $s^i_{k',\ell}$ have the same color for all $k,k' \in [3n]$.
\item \label{step:gadgets_T} For every $j \in [\q]$, $k \in [n]$, for every $c_1,c_2,c_3 \in [3]$ that are not all pairwise distinct, add a $\newgadget((c_1,c_2,c_3,1))$ to $G'$. Connect $t^j_{3k-2}$ to $\pi_1$, $t^j_{3k-1}$ to $\pi_2$, $t^j_{3k}$ to $\pi_3$, and $b_j$ to $\pi_4$. This construction ensures that if $b_j$ is colored $1$, all ``triangles'' in $T_j$ are properly colored. If $b_j$ is colored $2$ however, the gadgets add no additional restrictions to the coloring of vertices in $T_j$.
\end{enumerate}

\noindent This concludes the construction of~$G'$; we proceed with the analysis.

\begin{myclaim}\label{claim:existsS}
Let $c$ be a proper $3$-list coloring of $G'$. Then there exists $i \in [\q]$ such that for all $\ell \in [m]$ and for all $k,k' \in [3n]$ we have $c(s_{k,\ell}^i) = c(s_{k',\ell}^i)$.
\end{myclaim}
\begin{claimproof}
By the \newgadget added in Step \ref{step:gadgets_buttons_A}, there exists $i \in [\q]$ such that $c(a_i) \neq 2$. Since $L(a_i) = \{1,2\}$, this implies that $c(a_i) = 1$. We show that $i$ has the required property.

Suppose there exist $k,k' \in [3n]$ and $\ell \in [m]$ such that $c(s_{k,\ell}^i) \neq  c(s_{k',\ell}^i)$. Then there must also exist $k \in [3n-1]$ such that $c(s_{k,\ell}^i) \neq  c(s_{k+1,\ell}^i)$, or else they would all be equal.
Let $(c_1,c_2,c_3)$ correspond to the coloring of $s_{k,\ell}^i$,$s_{k+1,\ell}^i$, and $a_i$ as given by $c$. Then $\newgadget((c_1,c_2,c_3))$ was added in Step \ref{step:gadgets_S} and connected to these three vertices. But by Lemma \ref{lem:blocking_gadget}, it follows that any list-coloring of this \newgadget must assign color $c_i$ to some $\pi_i$ for $i\in[3]$. By the way they are connected to $s_{k,\ell}^i$,$s_{k+1,\ell}^i$ and $a_i$, one edge has two endpoints of equal color, which is a contradiction.
\end{claimproof}

We will say a triple of vertices $v_1,v_2,v_3$ is \emph{colorful} (under coloring $c$), if they receive distinct colors, meaning $c(v_1)\neq c(v_2)\neq c(v_3) \neq c(v_1)$.

\begin{myclaim}\label{claim:existsT}
Let $c$ be a proper $3$-list coloring of $G'$. Then there exists $j \in [\q]$ such that for all  $k \in [n]$  the triple $t_{3k}^j$, $t_{3k-1}^j$, $t_{3k-2}^j$ is colorful.
\end{myclaim}
\begin{claimproof}
By the \newgadget added in Step \ref{step:gadgets_buttons_B}, there exists $j\in[\q]$ such that $c(b_j) \neq 2$. Since $L(b_j) = \{1,2\}$, this implies that $c(b_j) = 1$. We show that $j$ has the desired property.

Suppose there exists $k \in [n]$, such that $t_{3k}^j$, $t_{3k-1}^j$, and $t_{3k-2}^j$ are not a colorful triple. Let $(c_1,c_2,c_3,c_4) \in [3]^4$ correspond to the coloring given to $t_{3k}^j$, $t_{3k-1}^j$, $t_{3k-1}^j$, and  $b_j$. In Step \ref{step:gadgets_T}, $\newgadget((c_1,c_2,c_3,c_4))$ was added, together with connections to these four vertices.
But by Lemma \ref{lem:blocking_gadget}, any list-coloring of this \newgadget must assign color $c_i$ to some $\pi_i$ for $i\in [4]$. By the way they are connected to~$t_{3k}^j$, $t_{3k-1}^j$, $t_{3k-2}^j$, and $b_j$, one edge has two endpoints of equal color, which is a contradiction.
\end{claimproof}

\begin{myclaim}\label{claim:cross}
The graph $G'$ is $3$-list colorable $\Leftrightarrow$ some input instance $X_{i^*j^*}$ is $2$-$3$-colorable.
\end{myclaim}
\begin{claimproof}
$(\Rightarrow)$
Suppose we are given a $3$-list coloring $c$ of $G'$. By Lemmas \ref{claim:existsS} and \ref{claim:existsT} there exist integers ${i^*}$ and ${j^*} \in [\q]$ such that for all $\ell \in [m]$ and for all $k,k'\in[3n]$ we have $c(s^{i^*}_{k,\ell}) = c(s^{i^*}_{k',\ell})$ and furthermore for all $k \in [n]$ the triple $t_{3k}^{j^*}$, $t_{3k-1}^{j^*}$, $t_{3k-2}^{j^*}$ is colorful. We show that this implies that $G_{i^*,j^*}$ has a valid $2$-$3$-coloring $c'$, which we define as follows. Let $c'(u_\ell):= c(s^{i^*}_{1,\ell})$ for $\ell \in [m]$ and let $c'(v_k):= c(t^{j^*}_k)$ for $k \in [3n]$. It remains to verify that $c'$ is a valid coloring of $G_{i^*,j^*}$. For any edge $\{u_\ell,v_k\}\in E(G_{i^*,j^*})$ with $\ell\in[m],k\in[3n]$, the endpoints receive different colors since
\[ c'(v_k) = c(t^{j^*}_{k})\neq c(s^{i^*}_{k,\ell}) = c(s^{i^*}_{1,\ell}) = c'(u_\ell).\]
For an edge $\{v_k,v_k'\} \in G_{i^*,j^*}$, its coloring corresponds to the coloring of $t_k^{j^*}$ and $t_{k'}^{j^*}$, which are colored differently by choice of ${j^*}$ in Lemma \ref{claim:existsT}. Furthermore, $u_\ell$ is always colored with color $1$ or $2$ as $L(s^{i^*}_{1,\ell}) = \{1,2\}$. Thereby, $c'$ is a proper $2$-$3$-coloring of $G_{i^*,j^*}$.

$(\Leftarrow)$ Suppose $c$ is a $2$-$3$-coloring of $G_{i^*,j^*}$, such that the $U$-partite set of $G_{i^*,j^*}$ is colored using only the colors $1$ and $2$. We will construct a $3$-list coloring $c'$ for graph $G'$. For $\ell \in [m]$ let $c'(s_{k,\ell}^{i^*}) := c(u_\ell)$ for all $k \in [3n]$. For $k \in [3n]$  let $c'(t^{j^*}_k) := c(v_k)$. For $j \neq j^*$ and $k \in [3n]$ let $c'(t^j_k) :=  3$. For $i \neq i^* \in [t']$, $k \in [3n]$ and $\ell \in [m]$, pick $c'(s^i_{k,\ell}) \in \{1,2\} \setminus \{c'(t^{j^*}_k)\}$. Let $c'(a_{i^*}):= 1$ and let $c'(b_{j^*}) := 1$. For $i \neq i^*$ let $c'(a_i) := 2$, similarly for $j \neq j^*$ let $c'(b_j):= 2$.  Before coloring the vertices in \newgadgets, we will show that $c'$ is proper on $G'[S \cup T]$. This will imply that the coloring defined so far is proper, as vertices in $A$ and $B$ only connect to \newgadgets.

Note that all edges in $G'[S \cup T]$ go from $S$ to $T$. Consider an edge $\{s,t\}$ for $s \in S, t \in T$.
Since $c'(s) \neq 3$, if $t \in T_j$ for $j \neq j^* \in [t']$, it follows immediately that $c'(s)\neq c'(t)$. Furthermore, if $s \in S_i$ for $i\neq i^* \in [t']$, $c'(s) \neq c'(t)$ by the definition of $c'(s)$. Otherwise, $s \in S_{i^*}$ and $t \in T_{j^*}$ and there exist $\{u,v\} \in E(G_{i^*,j^*})$ such that $c'(s) = c(u)$ and $c'(t) = c(v)$. Since $c$ is a proper coloring, it follows that $c'(s)\neq c'(t)$.

To complete the proof, extend $c'$ to also properly color all \newgadgets. This is possible for the \newgadgets added in Steps \ref{step:gadgets_buttons_A} and \ref{step:gadgets_buttons_B}, since $c'(a_{i^*}) = 1$ and $c'(b_{j^*}) = 1$. Furthermore, we show  this is possible for all \newgadgets introduced in Step \ref{step:gadgets_S}. A $\newgadget((c_1,c_2,c_3))$ introduced in Step \ref{step:gadgets_S} either has $\pi_3$ connected to $a_i$ for  $i \neq i^*$ with $c'(a_i) = 2 \neq c_3$, or it is connected to $a_{i^*}$ and in this case the vertices $s^{i^*}_{k,\ell}$ and $s^{i^*}_{k+1,\ell}$ are assigned equal colors and thus at least one of them has a coloring different from the coloring given by $c_1$ and $c_2$ as these colors are distinct. Thus, the colors that are forbidden on vertices $\pi_i$ by the connections to the rest of the graph, do not correspond to $(c_1,c_2,c_3)$ and $c'$ can be extended to color the entire \newgadget by Lemma \ref{lem:blocking_gadget}.

Similarly, coloring $c'$ can be extended to $\newgadgets(\vect{c})$ added in Step \ref{step:gadgets_T}, as either $\pi_4$ in the gadget is connected to $b_j$ for $j\neq j^*$ and  $c(b_j) = 2 \neq c_4$, or the three  vertices from $T$ connected to this gadget are colored with three different colors.
\end{claimproof}

The claim above shows that we have given a cross-composition into \textsc{$3$-List Coloring}. To obtain an instance of \textsc{$3$-Coloring}, we add a triangle consisting of vertices $\{C_1,C_2,C_3\}$ to the graph. We connect a vertex $v$ in $G'$ to $C_i$ if $i \notin L(v)$ for $i\in [3]$. This graph now has a proper $3$-coloring if and only if the original graph had a proper $3$-list coloring. Thus, by Claim \ref{claim:cross}, the resulting \pname{$3$-Coloring} instance acts as the logical \textsc{or} of the inputs.

It remains to bound the  number of vertices of $G'$.
In Step \ref{step:S} we add $|S| = m\cdot 3n\cdot \q$ vertices and in Step \ref{step:T} we add another $|T| = 3n \cdot \q$ vertices.
Then in Step \ref{step:buttons} we add $|A|+ |B| = 2\q$ additional vertices.
The two \newgadgets added in Steps \ref{step:gadgets_buttons_A} and \ref{step:gadgets_buttons_B} each have size $\Oh(\q)$.
The \newgadgets added in Step \ref{step:gadgets_S} have constant size, and we add six of them for each $i \in [\q], \ell \in [m], k \in [3n-1]$, thus adding $\Oh(\q\cdot m \cdot n)$ vertices.
Similarly, the \newgadgets added in Step \ref{step:gadgets_T} have constant size, and we add a constant number of them for each $j \in [\q], \ell \in [n]$, thus adding $\Oh(\q\cdot n)$ vertices. This gives a total of $\Oh(\q \cdot n \cdot m) = \Oh(\sqrt{t}\cdot (\max_{i,j} |X_{i,j}|)^{\Oh(1)})$ vertices. Theorem~\ref{thm:threecoloring:lowerbound} now follows from Theorem \ref{thm:cross_composition_LB} and Lemma~\ref{lem:NP_complete_coloring}.
\end{proof}

The set of all vertices of a graph is always a valid vertex cover for that graph. Thereby, it follows from Theorem \ref{thm:threecoloring:lowerbound} that the  lower bound also holds when parameterized by vertex cover. In \cite[Theorem 3]{JansenK2013Data}, it was shown that for any $q \geq 4$, \pname{$q$-Coloring} parameterized by vertex cover does not have a generalized kernel of size $\Oh(k^{q - 1 - \varepsilon})$, unless \containment. Combining these results gives a lower bound for \pname{$q$-Coloring} parameterized by vertex cover size.

\begin{corollary}
For any $q \geq 3$,  \pname{$q$-Coloring}  parameterized by vertex cover does not have a generalized kernel of bitsize $\Oh(k^{q - 1 -\varepsilon})$ for any $\varepsilon > 0$, unless \containment.
\end{corollary}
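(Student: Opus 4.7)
The plan is to handle the case $q=3$ and the cases $q \geq 4$ separately, and observe that both are already essentially in hand.

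For $q \geq 4$, I would simply invoke the lower bound of Jansen and Kratsch \cite[Theorem 3]{JansenK2013Data} stated in the paragraph preceding the corollary: under $\notcontainment$, \pname{$q$-Coloring} parameterized by vertex cover admits no generalized kernel of size $\Oh(k^{q-1-\varepsilon})$. Nothing further is required in this range.

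For $q = 3$, the plan is to derive the bound by a direct parameter comparison with Theorem~\ref{thm:threecoloring:lowerbound}. The key observation is that for any graph $G$, the full vertex set $V(G)$ is a valid vertex cover, so the vertex cover number $k$ satisfies $k \leq n$. Suppose, for contradiction, that there were a generalized kernel of bitsize $\Oh(k^{2-\varepsilon})$ for \pname{$3$-Coloring} parameterized by vertex cover. Given an instance parameterized by $n$, we may set the vertex cover to be all of $V(G)$ (with $k := n$) in polynomial time and then apply the hypothetical kernel, obtaining an equivalent instance of bitsize $\Oh(n^{2-\varepsilon})$. This contradicts Theorem~\ref{thm:threecoloring:lowerbound}, which rules out such a kernel under $\notcontainment$.

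There is no real obstacle here: the corollary is a clean combination of (i) the new $q=3$ sparsification lower bound of Theorem~\ref{thm:threecoloring:lowerbound} (promoted to a vertex cover lower bound via the trivial bound $k\leq n$), and (ii) the previously known $q\geq 4$ vertex cover lower bound from \cite{JansenK2013Data}. The only point to be slightly careful about is that Theorem~\ref{thm:threecoloring:lowerbound} speaks about \emph{generalized} kernels, which is exactly the notion needed to conclude the generalized kernel bound in the corollary.
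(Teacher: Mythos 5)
Your proposal is correct and follows essentially the same two-pronged argument the paper uses: cite \cite[Theorem~3]{JansenK2013Data} for $q\geq 4$, and for $q=3$ observe that $V(G)$ is always a vertex cover (so $k\leq n$), transferring the $\Oh(n^{2-\varepsilon})$ lower bound of Theorem~\ref{thm:threecoloring:lowerbound} to the vertex cover parameterization. No discrepancies.
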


The above lower bound carries over to \pname{$q$-Coloring} parameterized by the size of a twin-cover, since any vertex cover of a graph is also a valid twin-cover. Recall that \pname{$q$-Coloring}  is equivalent to \pname{$K_q$-Coloring} and $\Delta(K_q) = q-1$. Thereby, the above lower bound matches the kernel size given in Theorem \ref{thm:kernel} up to $k^{o(1)}$ factors.
\section{Conclusion}
We have given a kernel for \pname{$H$-Coloring parameterized by Twin-Cover} with $\Oh(k^{\Delta(H)})$ vertices and bitsize $\Oh(k^{\Delta(H)}\log{k})$. This kernel can be obtained without using information about (an approximation of) the minimum twin-cover of the input graph. It follows from this result that \pname{$q$-Coloring parameterized by Vertex Cover} has a kernel of bitsize $\Oh(k^{q-1}\log k)$, improving on the previously known kernel by almost a factor $k$. Furthermore, \pname{$3$-Coloring} when parameterized by the number of vertices has no kernel of size $\Oh(n^{2-\varepsilon})$, unless \containment. It was already known that for $q \geq 4$, \pname{$q$-Coloring parameterized by Vertex Cover} was unlikely to have a kernel of size $\Oh(k^{q-1-\varepsilon})$. Combining these results allows us to give the same lower bound for $q=3$, under the assumption that \notcontainment. Thereby we have provided an upper and lower bound on the kernel size of \pname{$q$-Coloring parameterized by Vertex Cover} for any $q \geq 3$, that match up to $k^{o(1)}$ factors.

It is easy to see that the kernel lower bounds also hold for \pname{$q$-List Coloring}, where every vertex $v$ in the graph has a list $L(v) \subseteq [q]$ of allowed colors. Furthermore, we can also apply our kernel, by first reducing an instance of \pname{$q$-List Coloring} to an instance of \pname{$q$-Coloring} using $q$ additional vertices, and adding these $q$ vertices to the twin-cover of the graph. This only changes the size of the obtained kernel by a constant factor. The kernel does not extend to the general \pname{List $H$-Coloring} problem, since the gadget to simulate the list constraints only works correctly when $H$ is a clique.

In this paper we gave a first example where finding redundant vertices and edges is done using appropriate polynomial equalities. It would be interesting to see if this technique  can be applied to obtain smaller kernels for other graph problems as well. To apply this idea, one needs to first identify which constraints should be modeled. When the constraints are found, they need to be written as equalities of low-degree polynomials over a suitably chosen field. This requires the clever construction of polynomials that have a sufficiently low degree, in order to obtain a good bound on the kernel size.

\subparagraph*{Acknowledgements}
We thank Tim Hartmann for suggesting the use of twin-cover as a parameter.

\bibliographystyle{plainurl}    
\bibliography{references}   

\begin{thebibliography}{10}

\bibitem{BodlaenderJK14}
Hans~L. Bodlaender, Bart M.~P. Jansen, and Stefan Kratsch.
\newblock Kernelization lower bounds by cross-composition.
\newblock {\em {SIAM} J. Discrete Math.}, 28(1):277--305, 2014.
\newblock \href {http://dx.doi.org/10.1137/120880240}
  {\path{doi:10.1137/120880240}}.

\bibitem{Bougeret2017treedepthmodulator}
Marin Bougeret and Ignasi Sau.
\newblock How much does a treedepth modulator help to obtain polynomial kernels
  beyond sparse graphs?
\newblock In {\em Proc. 12th IPEC 2017 (to appear)}, 2017.
\newblock URL: \url{http://arxiv.org/abs/1609.08095}.

\bibitem{Chitnis2017}
Rajesh Chitnis, L{\'a}szl{\'o} Egri, and D{\'a}niel Marx.
\newblock List {H}-coloring a graph by removing few vertices.
\newblock {\em Algorithmica}, 78(1):110--146, May 2017.
\newblock \href {http://dx.doi.org/10.1007/s00453-016-0139-6}
  {\path{doi:10.1007/s00453-016-0139-6}}.

\bibitem{CyganFKLMPPS15}
Marek Cygan, Fedor~V. Fomin, Lukasz Kowalik, Daniel Lokshtanov, D{\'{a}}niel
  Marx, Marcin Pilipczuk, Michal Pilipczuk, and Saket Saurabh.
\newblock {\em Parameterized Algorithms}.
\newblock Springer, 2015.
\newblock \href {http://dx.doi.org/10.1007/978-3-319-21275-3}
  {\path{doi:10.1007/978-3-319-21275-3}}.

\bibitem{DellM12Kernelization}
Holger Dell and D\'{a}niel Marx.
\newblock Kernelization of packing problems.
\newblock In {\em Proc. 23th SODA}, SODA '12, pages 68--81, 2012.

\bibitem{DellM14Satisfiability}
Holger Dell and Dieter van Melkebeek.
\newblock Satisfiability allows no nontrivial sparsification unless the
  polynomial-time hierarchy collapses.
\newblock {\em J. {ACM}}, 61(4):23:1--23:27, 2014.
\newblock \href {http://dx.doi.org/10.1145/2629620}
  {\path{doi:10.1145/2629620}}.

\bibitem{DowneyF13}
Rodney~G. Downey and Michael~R. Fellows.
\newblock {\em Fundamentals of Parameterized Complexity}.
\newblock Texts in Computer Science. Springer, 2013.
\newblock \href {http://dx.doi.org/10.1007/978-1-4471-5559-1}
  {\path{doi:10.1007/978-1-4471-5559-1}}.

\bibitem{FialaGK11Parameterized}
Jiří Fiala, Petr~A. Golovach, and Jan Kratochvíl.
\newblock Parameterized complexity of coloring problems: Treewidth versus
  vertex cover.
\newblock {\em Theoretical Computer Science}, 412(23):2513 -- 2523, 2011.
\newblock \href {http://dx.doi.org/10.1016/j.tcs.2010.10.043}
  {\path{doi:10.1016/j.tcs.2010.10.043}}.

\bibitem{Ganian15ImprovingVC}
Robert Ganian.
\newblock Improving vertex cover as a graph parameter.
\newblock {\em Discrete Mathematics {\&} Theoretical Computer Science},
  17(2):77--100, 2015.

\bibitem{HELL199092}
Pavol Hell and Jaroslav Ne\v{s}et\v{r}il.
\newblock On the complexity of {H}-coloring.
\newblock {\em Journal of Combinatorial Theory, Series B}, 48(1):92 -- 110,
  1990.
\newblock \href {http://dx.doi.org/10.1016/0095-8956(90)90132-J}
  {\path{doi:10.1016/0095-8956(90)90132-J}}.

\bibitem{Hols2017vckernel}
Eva-Maria~C. Hols and Stefan Kratsch.
\newblock Smaller parameters for vertex cover kernelization.
\newblock In {\em Proc. 12th IPEC 2017 (to appear)}, 2017.
\newblock URL: \url{https://arxiv.org/abs/1711.04604}.

\bibitem{JaffkeJ17Fine}
Lars Jaffke and Bart M.~P. Jansen.
\newblock Fine-grained parameterized complexity analysis of graph coloring
  problems.
\newblock In {\em Proc. 10th CIAC}, pages 345--356, 2017.
\newblock \href {http://dx.doi.org/10.1007/978-3-319-57586-5_29}
  {\path{doi:10.1007/978-3-319-57586-5_29}}.

\bibitem{JansenK2013Data}
Bart M.~P. Jansen and Stefan Kratsch.
\newblock Data reduction for graph coloring problems.
\newblock {\em Inf. Comput.}, 231:70--88, 2013.
\newblock \href {http://dx.doi.org/10.1016/j.ic.2013.08.005}
  {\path{doi:10.1016/j.ic.2013.08.005}}.

\bibitem{JansenP2016Optimal}
Bart M.~P. Jansen and Astrid Pieterse.
\newblock Optimal sparsification for some binary {CSP}s using low-degree
  polynomials.
\newblock In {\em Proc. 41st MFCS}, pages 71:1--71:14, 2016.
\newblock \href {http://dx.doi.org/10.4230/LIPIcs.MFCS.2016.71}
  {\path{doi:10.4230/LIPIcs.MFCS.2016.71}}.

\bibitem{JansenP2016Sparsification}
Bart M.~P. Jansen and Astrid Pieterse.
\newblock Sparsification upper and lower bounds for graph problems and
  not-all-equal {SAT}.
\newblock {\em Algorithmica}, 79(1):3--28, 2017.
\newblock \href {http://dx.doi.org/10.1007/s00453-016-0189-9}
  {\path{doi:10.1007/s00453-016-0189-9}}.

\bibitem{KratschPR16Point}
Stefan Kratsch, Geevarghese Philip, and Saurabh Ray.
\newblock Point line cover: The easy kernel is essentially tight.
\newblock {\em {ACM} Trans. Algorithms}, 12(3):40:1--40:16, 2016.
\newblock \href {http://dx.doi.org/10.1145/2832912}
  {\path{doi:10.1145/2832912}}.

\bibitem{EfficientGraphRepresentations}
Jeremy~P. Spinrad.
\newblock {\em Efficient graph representations}.
\newblock American Mathematical Society, 2003.

\end{thebibliography}

%
%

\end{document}